\newcommand*{\diff}{\mathop{\!\mathrm{d}\!}}
\date{September 2} 
\newtheorem{lemma}{Lemma}
\begin{document}

\title{Discontinuous transitions in globally coupled potential systems with additive noise}

\author{Rüdiger Kürsten}
\affiliation{\mbox{Institut für Theoretische Physik, Universität Leipzig, POB 100 920, D-04009 Leipzig, Germany and} \\
\mbox{International Max Planck Research School Mathematics in the Sciences, Inselstraße 22, D-04103 Leipzig, Germany} }
\affiliation{\mbox{Institut für Physik, Universität Greifswald, Felix-Hausdorff-Str. 6, 17487 Greifswald, Germany}}

\author{Ulrich Behn}
\affiliation{\mbox{Institut für Theoretische Physik, Universität Leipzig, POB 100 920, D-04009 Leipzig, Germany and} \\
\mbox{International Max Planck Research School Mathematics in the Sciences, Inselstraße 22, D-04103 Leipzig, Germany} }

\begin{abstract}
An infinite array of globally coupled overdamped constituents moving in a double-well potential with $n$-th order saturation term under the influence of additive Gaussian white noise is investigated. The system exhibits a continuous phase transition from a symmetric phase to a symmetry-broken phase. The qualitative behavior is independent on $n$. The critical point is calculated for strong and for weak noise, these limits are also bounds for the critical point. Introducing an additional nonlinearity, such that the potential can have up to three minima, leads to richer behavior. There the parameter space divides in three regions, a region with a symmetric phase, a region with a phase of broken symmetry and a region where both phases coexist. The region of coexistence collapses into one of the others via a discontinuous phase transition whereas the transition between the symmetric phase and the phase of broken symmetry is continuous. The tricritical point where the three regions intersect, can be calculated for strong and for weak noise. These limiting values form optimal bounds on the tricritical point. 
In the region of coexistence simulations of finite systems are performed.
One finds that the stationary distribution of finite but large systems differs qualitatively from the one of the infinite system. Hence the limits of stationarity and large system size do not commute.
\end{abstract}
\maketitle

\section{Introduction}

Noise in nonlinear systems can induce many nontrivial effects such as for example stochastic resonance \cite{GHJM98}, stochastic transport \cite{Reimann02} or noise induced transitions \cite{HL84}. 
In spatially extended systems \cite{SSG07, GS99} the dynamics of one system site depends on its neighboring sites.
In this paper we deal with global coupling, where each site is coupled to each other which might be interpreted as a mean field approximation of systems with local coupling.
We consider anharmonic systems with additive Gaussian white noise 
\begin{align}
	\dot{x_i} = & -\partial_{x_i}U(\mathbf{x}) + \xi_i(t),
\label{eq:langevingeneral}
\end{align}
where $i=1, \dots, L$ and $\mathbf{x}$ denotes the vector of all $x_i$. The potential $U$ consists of a single-particle potential $U_0$, that is felt by each particle, and a two-particle interaction potential $U_{\text{int}}$, that is felt by each pair of particles and that depends only on the distance between the particles
\begin{align}
	U(\mathbf{x})= \sum_{i=1}^{L} U_{0}(x_i) + \frac{1}{L}\sum_{1\le i < j \le L} U_{\text{int}}(x_i-x_j).
	\label{eq:potentialgeneral}
\end{align}
The $\xi_i$ are delta correlated Gaussian white noise of strength $\sigma$
\begin{align}
	\langle \xi_i(t) \xi_j(s)\rangle = \sigma^{2}\delta_{ij} \delta(t-s),
	\label{eq:noise}
\end{align}
where $\langle \cdot \rangle$ denotes the expectation value.

We are interested in nonlinear systems, that means either $U_0$ or $U_{\text{int}}$ are anharmonic potentials. In this paper we examine only harmonic coupling
\begin{align}
	U_{\text{int}}(r)= \frac{D}{2}r^{2},
	\label{eq:linearcoupling}
\end{align}
with coupling constant $D$ and anharmonic single-particle potentials $U_{0}$.

The system \eqref{eq:langevingeneral} can equivalently be described by the Fokker-Planck equation
\begin{align}
	\partial_{t} p(\mathbf{x}, t)= \sum_{i=1}^{L} - \partial_{x_{i}} \Big\{ \Big[ \sum_{j=1}^{L} -\partial_{x_{j}} U(\mathbf{x}) -\partial_{x_{j}}\frac{\sigma^{2}}{2} \Big]p(\mathbf{x}, t) \Big\},
	\label{eq:fokkerplanckgeneral}
\end{align}
which has the stationary solution
\begin{align}
	p_{\text{s}}(\mathbf{x})&= \frac{1}{Z} \exp\left[ -  \frac{2}{\sigma^{2}} U(\mathbf{x}) \right]
	\label{eq:statsolutiongeneral}
\end{align}
with normalization
\begin{align}
	Z&= \int_{\mathbb{R}^{L}}\diff \mathbf{x} \exp\left[ - \frac{2}{\sigma^{2}} U(\mathbf{x}) \right].
\end{align}

The empirical measure defined by
\begin{align}
	p_{\text{e}}^{L}(A,t):= \frac{1}{L}\sum_{i=1}^{L} \mathbbm{1}_{A}(x_{i}(t))
	\label{eq:empiricalmeasure}
\end{align}
indicates how many particles are in the set $A$ at time $t$.
It converges for $L \rightarrow \infty$ to a probability measure defined by the density $p(x, t)$, which is the solution of the nonlinear Fokker-Planck equation \cite{Frank05, Shiino85}
\begin{align}
	\partial_{t}p(x, t) =&  -\partial_{x}\Big\{ \Big[ -\partial_{x} U_{0}(x) + D\int_{-\infty}^{\infty}x' p(x',t) \diff x' \notag \\
	&-Dx  -\frac{\sigma^{2}}{2} \partial_{x} \Big]p(x,t) \Big\}.
	\label{eq:nonlinearfokkerplanck}
\end{align}
The stationary solution is
\begin{align}
	p_{\text{s}}(x, m)= \frac{1}{Z}\exp\left[ -\frac{2}{\sigma^{2} } \big(U_{0}(x) + \frac{1}{2} D x^{2} -Dmx\big)  \right]
	\label{eq:statsolutionnonlinearfp}
\end{align}
with normalization
\begin{align}
	Z=\int_{-\infty}^{\infty}\diff x \exp\left[ -\frac{2}{\sigma^{2} } \big(U_{0}(x) + \frac{1}{2} D x^{2} -Dmx\big)  \right],
	\label{eq:statnormalization}
\end{align}
where the mean field $m$ has to satisfy the self-consistency equation \cite{DZ78}
\begin{align}
	m= \int_{-\infty}^{\infty} \diff x x p_{\text{s}}(x, m).
	\label{eq:selfconsistency}
\end{align}
One can reformulate this condition on $m$ using the self-consistency map
\begin{align}
	F(m):= \int_{-\infty}^{\infty}\diff x' x'p_{\text{s}}(x', m).
	\label{eq:selfconsistencymap}
\end{align}
Fixed points of this map satisfy the self-consistency condition \eqref{eq:selfconsistency}.
For certain types of single-particle potentials $U_0$ we analyze these fixed points in detail.

In Sec.~\ref{sec:two} we consider double-well potentials.
Systems with such potentials but with multiplicative noise have been studied, e.g., in \cite{GHS93, BLMKB02} and with both, additive and multiplicative noise, e.g., in \cite{BPAH94, GPSB96}.
The model with pure additive noise and the single-particle potential $U_0= (a/2)x^2-(1/4)x^4$ was introduced by Kometani and Shimizu \cite{KS75} and later on it was intensively studied \cite{DZ78, Dawson83, Shiino85, Shiino87, BPAH94, GPSB96, KGB13}.
It exhibits a continuous phase transition from a symmetric phase to a phase of broken symmetry at a unique critical point \cite{Dawson83, KGB13}.

We show that for double-well potentials with higher order saturation term the qualitative behavior of the infinite system subject to additive noise is the same.
There exists a continuous symmetry breaking phase transition at a unique critical point.
We prove existence and uniqueness of the critical point and calculate optimal lower and upper bounds for the critical point that are assumed in the limits of strong and weak noise.

In Sec.~\ref{sec:three}, we deal with more complicated single-particle potentials that can have up to three local minima.
Potentials of this type in a less general form have been studied in \cite{BPT94, LZS98, ZGS99} but with multiplicative and additive noise.
We find that for the infinite system the phase space divides into three regions, a phase with only stable symmetric solutions, a phase of broken symmetry and a phase where stable symmetric and symmetry-broken solutions coexist.
The transition between symmetric and symmetry-broken phase is continuous.
On the other hand the transitions between the region of coexistence and one of the other phases are discontinuous.
A noise induced discontinuous transition was first reported \cite{MLKB97}.
All three phases meet in the tricritical point.
We prove optimal bounds for the tricritical point that are assumed asymptotically for strong and for weak noise.

We further sample the stationary center of mass distribution for systems of finite size in the region of coexistence using patchwork sampling \cite{KB16}.
We find that the two limits of infinite observation time (or stationarity) and infinite system size do not commute.
If the limit of infinite system size is performed first there are three stable stationary solutions from which two are breaking the symmetry and one is symmetric.
If on the other hand the limit of infinite observation time is performed first the distribution is always symmetric. Depending on parameters it consists either of a peak at zero or of two peaks located at the nonzero fixed points of the mean field of the infinite system.

Sec.~\ref{sec:conclusions} summarizes our main conclusions and gives a brief outlook.
Technical discussions and mathematical proofs have been moved to the Appendices A-D.
\section{Double-Well Potential\label{sec:two}}
The case
\begin{align}
	U_{0}(x)= -\frac{a}{2} x^{2} + \frac{1}{4} x^{4} 
	\label{eq:nonlinearsystem1}
\end{align}
with harmonic interaction \eqref{eq:linearcoupling}
was intensively studied, e.g., in \cite{KS75, DZ78, Dawson83, Shiino85, Shiino87, KGB13}. The one-particle potential is a double-well potential where the barrier is tuned by the parameter $a$. The interaction is harmonic with coupling strength $D$.
The system shows a phase transition in the thermodynamic limit, $L\rightarrow \infty$.
Given two of the parameters $a, D, \sigma>0$, there exists a unique critical value of the third one that separates a symmetric phase from a phase of broken symmetry \cite{KGB13}.
In the symmetric phase, as Shiino \cite{Shiino87} showed, $p_{\text{s}}(x, m=0)$ is the only stationary solution of Eq.~\eqref{eq:nonlinearfokkerplanck}. In the symmetry-broken phase $p_{\text{s}}(x, m=0)$ is an unstable solution of Eq.~\eqref{eq:nonlinearfokkerplanck} and $p_{\text{s}}(x, m=m_{+})$ and $p_{\text{s}}(x, m=-m_{+})$ are the two stable solutions of Eq.~\eqref{eq:nonlinearfokkerplanck}, where $\pm m_{+}$ are the two nonzero fixed points of the self-consistency map \eqref{eq:selfconsistencymap}, cf. \cite{Shiino87}.

We are going to generalize the results of \cite{KGB13} for higher order saturation terms considering
\begin{align}
	U_{0}(x)&= -\frac{a}{2} x^{2} + \frac{1}{n} x^{n}
	\label{eq:nonlinearsystem2}
\end{align}
for $n=6, 8, 10, \dots$ and the same harmonic interaction potential \eqref{eq:linearcoupling}.
It is possible to reduce the number of parameters by one through a rescaling of space and time. However we keep all parameters since they have a physical meaning as noise strength, coupling strength and potential barrier height.

\subsection{Self-consistency map}

The self-consistency map \eqref{eq:selfconsistencymap} has for the above choice of $U_{0}$ the following properties, cf. Fig.~\ref{fig:selfcons}
\begin{align}
	F^{(2k)}(0)&=0,
	\label{eq:selfconsproperties1}\\
	F(m)&=-F(-m),
	\label{eq:selfconsproperties2}\\
	F''(m)&\lesseqgtr0 \text{ for } m\gtreqless0,
	\label{eq:selfconsproperties3}\\
	F'(m)&\rightarrow 0 \text{ for }m\rightarrow\infty.
	\label{eq:selfconsproperties4}
\end{align}
Equations~(\ref{eq:selfconsproperties1},\ref{eq:selfconsproperties2}) follow directly from the definition \eqref{eq:selfconsistencymap}, the inequalities~\eqref{eq:selfconsproperties3} are a consequence of ferromagnetic inequalities
\footnote{We will state a special case of the Griffith-Hurst-Sherman inequality. Proofs for spin systems can be found in \cite{GHS70, Lebowitz74, EM75}, for continuous systems such as used here, proofs are given in \cite{EMN76, EN78}, cf. also \cite{Dawson83}. Let $p(x)=\exp[c_1 -\lambda x^{n}]$ with arbitrary constant $c_1$, $\lambda >0$ and $n=4, 6, 8, 10, \dots$, then $\diff\,^{3}/\diff h^{3} \left[ \ln \int_{-\infty}^{\infty}\diff x \exp(hx)\exp(-\gamma x^{2})p(x) \right]\le 0$, for arbitrary $\gamma$ and $h\ge 0$. $\lambda= 2/(\sigma^{2}n)$, $\gamma= (D-a)/\sigma^{2}$ and $h=Dm$ yields the desired result \eqref{eq:selfconsproperties3} for $m>0$, the case $m<0$ follows by symmetry.} 
and \eqref{eq:selfconsproperties4} can be directly obtained by asymptotic evaluation of the integral occurring in \eqref{eq:selfconsistencymap}.
Furthermore the function $F(\cdot)$ is continuous and arbitrarily often continuously differentiable.
For positive $a, D, \sigma$ $F$ is continuous and arbitrarily often continuously differentiable also with respect to these parameters.

\begin{figure}
	\includegraphics{./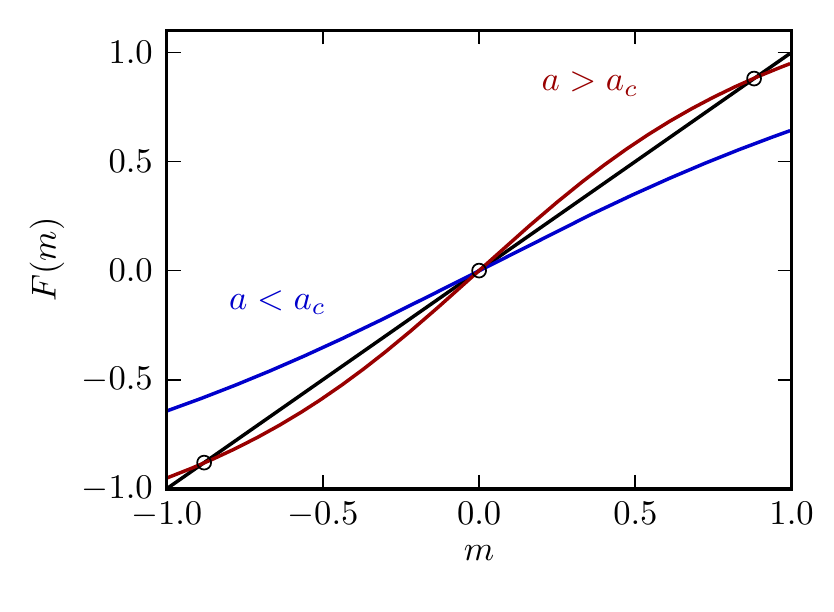}
	\caption{(color online) Self-consistency map \eqref{eq:selfconsistencymap} for $n=4$, $\sigma=1$, $D=1$ and $a=0.5<a_c$ (blue line) and for $a=1.5>a_c$ (red line). Fixed points of the map occur at the intersection with the straight line and are indicated by circles. In the symmetric phase (blue line) there is only one fixed point at $m=0$ which is stable. In the symmetry-broken phase (red line) $m=0$ is unstable and there exist two unstable fixed points $m_{+}$ and $m_{-}=-m_{+}$. One of these fixed points is selected by initial conditions.\label{fig:selfcons}}
\end{figure}

Equation \eqref{eq:selfconsproperties1} states that $m=0$ is always a fixed point of the self-consistency map. If $F'(0)\le 1$ we conclude from \eqref{eq:selfconsproperties3} that $m=0$ is the only fixed point, it is stable.
If on the other hand $F'(0)>1$ $m=0$ is an unstable fixed point and we follow from  (\ref{eq:selfconsproperties3}, \ref{eq:selfconsproperties4}), that there is exactly one positive fixed point $m_{+}$ of Eq.~\eqref{eq:selfconsistencymap} and due to the symmetry \eqref{eq:selfconsproperties2} there is also exactly one negative fixed point $m_{-}$ with $m_{+}=-m_{-}$. These nonzero fixed points are stable and break the reflection symmetry $x \leftrightarrow -x$.
This argumentation follows Dawson \cite{Dawson83} who stated equivalent results for the case $n=4$.

\subsection{Phase transition condition}

Stable fixed points of the self-consistency map \eqref{eq:selfconsistencymap} correspond to stable stationary solutions of the nonlinear Fokker-Planck equation \eqref{eq:nonlinearfokkerplanck} \cite{Shiino87}. Hence the transition from symmetric to symmetry-broken solutions occurs when 
\begin{align}
	F'(0)=1 \qquad \Leftrightarrow \qquad \langle x^2\rangle_{m=0} = \frac{\sigma^2}{2D},
	\label{eq:PTCmomentrelation}
\end{align}
where $\langle \cdot \rangle_{m=0}$ denotes the expectation value with respect to $p_{\text{s}}(x, m=0)$, cf. Eq.~\eqref{eq:statsolutionnonlinearfp}.

\subsection{The critical manifold}

We consider for the moment $D, \sigma$ to be fixed parameters and $a$ as a control parameter.
For $D\le 0$ there is no symmetry breaking as can be seen from Eq.~\eqref{eq:PTCmomentrelation}, therefore we consider only $D>0$.
For the symmetric solution, the second moment as a function of $a$ has the following properties
\begin{flalign}
	\lim_{a\rightarrow - \infty} \langle x^{2}\rangle_{m=0}(a) &= 0,
	\label{eq:secondmomentproperties1}\\
	\lim_{a\rightarrow  \infty} \langle x^{2}\rangle_{m=0}(a) &= +\infty,
	\label{eq:secondmomentproperties2}\\
	\partial_{a} \langle x^{2}\rangle_{m=0}(a) &= \frac{1}{\sigma^{2}}\left( \langle x^4\rangle_{m=0}-\langle x^{2}\rangle_{m=0}^{2} \right)
	\notag
	\\
	= \frac{1}{\sigma^{2}}\big[ \langle (x^{2}- & \langle x^{2}\rangle_{m=0})^{2}  \rangle_{m=0} \big]>0.
	\label{eq:secondmomentproperties3}
\end{flalign}
The first two properties follow from an asymptotic evaluation of the expectation value. The third property follows from direct calculation.
From the properties (\ref{eq:secondmomentproperties1}, \ref{eq:secondmomentproperties2}) it follows by mean value theorem that there exists a critical value $a_c$ such that
\begin{align}
	\langle x^{2}\rangle_{m=0}\big(a_{c}(\sigma,D), \sigma, D \big)=\frac{\sigma^{2}}{2D}.
	\label{eq:acritical}
\end{align}
From the monotonicity \eqref{eq:secondmomentproperties3} we conclude that this critical point is unique. Furthermore we find that
\begin{align}
	F'(0) \lesseqgtr 1 \text{ for } a \lesseqgtr a_{c}.
	\label{eq:fprimeofzero}
\end{align}
Hence below and at the critical point $a\le a_{c}$ the system is in the symmetric phase and above the critical point $a>a_{c}$ the system is in one of the symmetry-broken phases.

We consider $a_{c}=a_{c}(\sigma^{2}, D)$ as a function of noise strength and coupling strength that is well defined for $\sigma,D>0$.
We will show that $a_{c}(\sigma^{2}, D=D_0)$ considered as a function of $\sigma^{2}$ for arbitrary but fixed $D_0>0$, assumes all positive values. Equivalently for some fixed $\sigma_{0}>0$, $a_{c}(\sigma_{0}^{2}, D)$ considered as a function of $D$, assumes all positive values as well.
It follows that for fixed positive $a, \sigma_0$ there exists a critical value $D_{c}$ and for fixed positive $a, D_0$ there exists a critical value $\sigma_{c}$, such that
\begin{align}
	\langle x^{2}\rangle_{m=0}\big(a, \sigma, D_c(a, \sigma)\big)&= \frac{\sigma^{2}}{2 D_{c}(a, \sigma)},
	\label{eq:secondmomentcritD}\\
	\langle x^{2}\rangle_{m=0}\big(a, D_c, \sigma_{c}(a, D)\big)&= \frac{\sigma_{c}^{2}(a, D)}{2 D}.
	\label{eq:secondmomentcritsigma}
\end{align}
We will further show that $a_c(\sigma^{2}, D)$ is invertible with respect to each argument from which follows that the critical values $\sigma_{c}(a, D)$, $D_{c}(a, \sigma)$ are unique.

Since $a_{c}(\sigma^{2}, D)$ is the zero of a, with respect to both arguments, continuous and arbitrarily often continuously differentiable function, cf. Eq.~\eqref{eq:acritical}, it is continuous and arbitrarily often continuously differentiable with respect to both arguments itself.
Hence to show the invertibility it suffices to show that $a_{c}(\sigma^{2}, D)$ is monotone with respect to both arguments.
The monotonicity is proved in Appendix \ref{app:A}.

In the next subsection we will show that
\begin{align}
	\left( \frac{\sigma^{2}}{2D}\right)^{n/2-1}< a_{c}(\sigma,D)<(n-1)!!\left( \frac{\sigma^{2}}{2D}\right)^{n/2-1},
	\label{eq:boundsonac}
\end{align}
where $(n-1)!!$ denotes the product of all odd positive integers from one to $n-1$.
We conclude from these bounds that if one of the parameters $\sigma, D>0$ is fixed, the infimum of $a_c$ as a function of the other parameter is zero and the supremum is infinite. Since $a_c$ is continuous it follows that it assumes all positive values.

Hence we have shown that the critical manifold is well behaved, that is given two of the system parameters $a, D, \sigma$, positive, there exists a unique critical value for the third parameter that separates the symmetric phase from the phase of broken symmetry. The qualitative behavior of the system is the same as for $n=4$ \cite{KGB13}.

We can develop the self-consistency map around the critical point to obtain the normal form
\begin{align}
	F(m) \approx F'(0) m + \frac{1}{6} F'''(0) m^{3}.
	\label{eq:normalform1}
\end{align}
Note that $F'''(0)= (2D/\sigma^2)^3 \kappa_{4}<0$ at the critical point, where the fourth cumulant $\kappa_4$ is negative as shown in Appendix \ref{app:B}, cf. Eq.~\eqref{eq:kurtosis}. Close to the critical point we have $F'(0)= 1 + c (a-a_c) + \mathcal{O}( (a-a_c)^2)$, with some constant $c$. Hence the stable fixed points grow with exponent $1/2$ close to the critical point. 

\subsection{Optimal bounds for the critical point}

From the symmetric stationary solution
\begin{align}
	p_{\text{s}}(x, m=0)= \frac{1}{Z} \exp \left[ \frac{2}{\sigma^{2}} \left(\frac{a-D}{2}x^{2} -\frac{1}{n} x^{n} \right) \right],
	\label{eq:ps}
\end{align}
cf. Eq.~\eqref{eq:statsolutionnonlinearfp}, we find for $k=0, 1, 2, \dots$
\begin{align}
	&Z_{m=0} \langle x^k \rangle_{m=0} =  \int_{-\infty}^{\infty} \diff x  x^k \exp \left[ \frac{2}{\sigma^{2}} \left(\frac{a-D}{2}x^{2} -\frac{1}{n} x^{n} \right) \right]
	\notag \\ 
	&=  -\frac{1}{k+1} \int_{-\infty}^{\infty} \diff x \frac{2}{\sigma^2} \left( (a-D) x^{k+2} -x^{k+n} \right)
	\notag \\ 
	&\phantom{=} \times \exp \left[- \frac{2}{\sigma^{2}} \left(\frac{a-D}{2}x^{2} -\frac{1}{n} x^{n} \right) \right] 
	\label{eq:relmoments}
\end{align}
and hence
\begin{align}
	\frac{\sigma^{2}}{2}(k+1)\langle x^k\rangle_{m=0} =  \langle x^{k+n}\rangle_{m=0}  -(a-D) \langle x^{k+2}\rangle_{m=0}.
	\label{eq:relmomens2}
\end{align}

As we prove in Appendix \ref{app:B}, for all parameter values it holds
\begin{align}
	\langle x^{n}\rangle_{m=0}-\langle x^{2}\rangle_{m=0}^{n/2}>0,
	\label{eq:momentrelation1}\\
	\langle x^{n}\rangle_{m=0}-(n-1)!!\langle x^{2}\rangle_{m=0}^{n/2}<0.
	\label{eq:momentrelation2}
\end{align}
Inserting the moment relation \eqref{eq:relmomens2} for $k=0$ we find the bounds \eqref{eq:boundsonac} for $a_c$.
These bounds are optimal in the sense that they are asymptotically assumed in the limits of strong and weak noise, respectively, cf. Appendix \ref{app:C}. In Fig.~\ref{fig:ac} the critical control parameter $a_c$ is displayed as a function of $(\sigma^{2}/2D)^{n/2-1}$ for $n=6$. The values of $a_c$ have been obtained by numerical evaluation of Eq.~\eqref{eq:PTCmomentrelation}. All values lie between the bounds \eqref{eq:boundsonac}. For weak noise $a_c$ is close to the upper bound, for strong noise it is close to the lower bound.
\begin{figure}[H]
	\includegraphics{./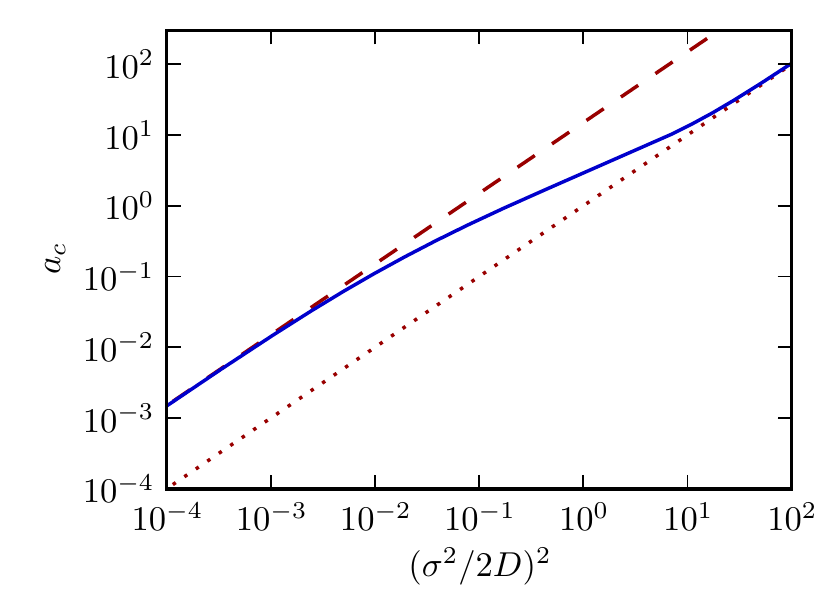}
	\caption{(color online) Critical point $a_c$ as a function of $(\sigma^{2}/2D)^{2}$ for $n=6$ on a double logarithmic scale.
	Numerical results for $D=1$ and different noise strengths are displayed by the solid line (blue).
	Optimal bounds on $a_c$ are given by \eqref{eq:boundsonac}. The upper bound (red dashed line) is asymptotically reached for weak noise, the lower bound (red dotted line) is asymptotically reached for strong noise.\label{fig:ac}}
\end{figure}
\section{Competing Nonlinearities\label{sec:three}}

In the previous section we demonstrated that the qualitative behavior of the system \eqref{eq:nonlinearsystem2} does not depend on the order of the saturation term. However, two competing nonlinear terms may change the principal behavior.
We consider the system
\begin{align}
	U_{0}(x)&= -\frac{a}{2} x^{2}  - \frac{b}{4} x^{4} + \frac{1}{6} x^{6}
	\label{eq:nonlinearsystem3}
\end{align}
with the same interaction potential \eqref{eq:linearcoupling}.

\subsection{Equal sign of nonlinearities}

If $b<0$ the quartic and the sixtic term in Eq.~\eqref{eq:nonlinearsystem3} have the same sign. In this case there is not really a competition between the two terms. Many properties of the system are similar to the system of the previous section, for example the properties (\ref{eq:selfconsproperties1}-\ref{eq:selfconsproperties4}) of the self-consistency map. We conclude from these relations that there is a symmetric phase and a phase of broken symmetry and that there exists a critical point $a_c$ for positive parameters $D,\sigma>0$ that is given by $F'(0)=1$.
The normal form of the system close to the critical point remains
\begin{align}
	F(m) \approx F'(0) m + \frac{1}{6} F'''(0) m^{3}
	\label{eq:normalform2}
\end{align}
and developing $F'(0)= 1 + c (a-a_c) + \mathcal{O}( (a-a_c)^2)$ with some constant $c$ yields the mean field exponent $1/2$ for the stable fixed points close to the critical point.
Essentially in this case the system behaves qualitatively as the systems considered above.

\subsection{Opposite sign of nonlinearities}

If $b>0$ there is a competition between the negative quartic term and the positive sixtic term in Eq.~\eqref{eq:nonlinearsystem3}, which leads to new effects.
The properties (\ref{eq:selfconsproperties1}, \ref{eq:selfconsproperties2}, \ref{eq:selfconsproperties4}) of the self-consistency map remain valid also for \eqref{eq:nonlinearsystem3}, but the property \eqref{eq:selfconsproperties3} is not true in general.
Therefore the self-consistency map can exhibit more complex behavior and it is not straightforward to generalize the analysis of the previous sections.
From a numerical evaluation of the self-consistency map we have observed three characteristic cases that are presented in Fig.~\ref{fig:phasediagram}b. In phase I the stable fixed point $m=0$ is the only fixed point of the self-consistency map. In phase II $m=0$ is an unstable fixed point and there are two nonzero stable fixed points $m_{+}=-m_{-}$. There is another phase, phase III, where stable symmetric and symmetry-broken solutions coexists. Hence in phase III $m=0$ is a stable fixed point, there are two unstable fixed points $m_{+,\text{us}}=-m_{-,\text{us}}$ and two more stable fixed points $m_{+,\text{s}}=-m_{-,\text{s}}$.

From phase I, changing parameters one can reach phase II via a supercritical pitchfork bifurcation and phase III via a saddle-node bifurcation, when a new fixed point $m^{*}$ with $F(m^{*})=m^{*}$ and $F'(m^{*})=1$ appears at the transition. Furthermore phase III is reached from phase II via a subcritical pitchfork bifurcation, where the unstable fixed point $m=0$ gains stability. The pitchfork bifurcations occur at $m=0$ and the transition condition is $F'(0)=1$, cf. Eq.~\eqref{eq:PTCmomentrelation}. For the saddle-node bifurcation it is not so easy to localize $m^{*}$ since its occurrence depends on global properties of the self-consistency map.

Although in general a rigorous analysis of the self-consistency map is difficult we can mathematically found our observation locally. Consider a parameter regime of phase III, where all fixed points are close to $m=0$. In phase III we have $F'(0)<1$ and in any case $F''(0)=0$, cf. \eqref{eq:selfconsproperties3}. In order to have a positive fixed point close to zero we require that $F'''(0)>0$ and to have a second positive fixed point close to it, we need $F'''(\varepsilon)<0$ for some small, positive $\varepsilon$. Therefore, as $F^{(4)}(0)=0$, cf. Eq.~\eqref{eq:selfconsproperties1}, $F^{(5)}(0)<0$ is necessary.
Hence to be in phase III and have all fixed points close to $m=0$ the requirements are
\begin{align}
	F'(0)&<1,
	\\
	F'''(0)&>0,
	\\
	F^{(5)}(0)&<0.
	\label{eq:requirementsphase3}
\end{align}
If the first inequality is reversed we are in phase II, if the second condition fails we are in phase I. Hence a point that satisfies
\begin{align}
	F'(0)&=1,
	\\
	F'''(0)&=0.
	\label{eq:conditiontricritical}
\end{align}
is critical twofold. It is called the tricritical point due to the fact that it lies at the intersection of all three phases. We will show that at the tricritical point and also in a neighborhood of it, the condition \eqref{eq:requirementsphase3} is satisfied. Thus the picture described above is rigorous at least in a neighborhood of the tricritical point.
In order to prove \eqref{eq:requirementsphase3}, in fact we need only the condition \eqref{eq:conditiontricritical}.
Note that the derivatives of the self-consistency map coincide up to a factor with the cumulants of the stationary probability distribution $p_{\text{s}}(x, m)$
\begin{align}
	F^{(k)}(m) = \big( 2D/\sigma^{2} \big)^{k} \kappa_{k+1},
	\label{eq:selfconsistencyderivatives}
\end{align}
where $k=0,1,2,\dots$ and $\kappa_{k}$ is the $k$th cumulant. 
Hence we need to show that 
\begin{align}
\kappa_{6}<0
	\label{eq:negativesixthcumulant}
\end{align}
in a neighborhood of the parameter set where $\kappa_{4}=0$ at $m=0$. The proof is given in Appendix \ref{app:D}. 

Close to the tricritical point we can develop the self-consistency map for small $m$ as
\begin{align}
	F(m)= \alpha m + \beta m^{3} + \gamma m^{5},
	\label{eq:developmentselfcons}
\end{align}
where $\alpha, \beta, \gamma$ are functions of the parameters $a,b,\sigma,D$ and $\gamma<0$ and at the tricritical point $\beta=0$. Fixed points of the self-consistency map are real roots of
\begin{align}
	(\alpha-1) +\beta m^{2} + \gamma m^{4}
	\label{eq:fixedpointsdev}
\end{align}
and $m=0$. For $\alpha\le 1$ and $\beta\le 0$, $m=0$ is the only fixed point and the system is in phase I.
For $\alpha > 1$, $m=0$ is unstable and there are two more fixed points that are stable
\begin{align}
	m_{+/-}=\pm \sqrt{\frac{-\beta}{2\gamma} + \sqrt{ \frac{\beta^{2}}{4\gamma^{2}}+ \frac{1-\alpha}{\gamma}}}.
	\label{eq:phase2}
\end{align}
The system is in phase II.
If $\alpha< 1$ and $\beta \ge 0$ the system is either in phase I or in phase III. If $\frac{\beta^{2}}{4\gamma^{2}} < \frac{\alpha-1}{\gamma}$ the system is in phase I. If $\frac{\beta^{2}}{4\gamma^{2}} > \frac{\alpha-1}{\gamma}$ the system is in phase III. The unstable fixed points are
\begin{align}
	m_{+/-, \text{us}}=\pm \sqrt{\frac{-\beta}{2\gamma} - \sqrt{ \frac{\beta^{2}}{4\gamma^{2}}+ \frac{1-\alpha}{\gamma}}}
	\label{eq:phase31}
\end{align}
and the stable fixed points are $m=0$ and
\begin{align}
	m_{+/-, \text{s}}=\pm \sqrt{\frac{-\beta}{2\gamma} + \sqrt{ \frac{\beta^{2}}{4\gamma^{2}}+ \frac{1-\alpha}{\gamma}}}.
	\label{eq:phase32}
\end{align}
The saddle-node bifurcation occurs when $\alpha<1$, $\beta>0$ and $\frac{\beta^{2}}{4\gamma^{2}} = \frac{\alpha-1}{\gamma}$.
The coefficients in Eq.~\eqref{eq:developmentselfcons} can be expressed in terms of the cumulants according to Eq.~\eqref{eq:selfconsistencyderivatives}. We have
\begin{align}
	\alpha &= \frac{2D}{\sigma^{2}} \langle x^{2}\rangle,
	\notag
	\\
	\beta &= \frac{1}{6}\Big(\frac{2D}{\sigma^{2}} \Big)^{3} \kappa_{4},
	\notag
	\\
	\gamma &= \frac{1}{120}\Big( \frac{2D}{\sigma^{2}}\Big)^{5} \kappa_{6}.
	\label{eq:coeffcumulants}
\end{align}
Hence the condition for the saddle-node bifurcation close to the tricritical point can be rewritten as
\begin{align}
	\langle x^{2}\rangle - \frac{\sigma^{2}}{2D}= \frac{5}{6} \frac{\kappa_{4}^{2}}{\kappa_{6}}.
	\label{eq:saddlenode}
\end{align}
In analogy to the previous cases we can investigate the behavior of the stable fixed points close to the critical point when the parameter $a$ is varied. The leading behavior is 
\begin{align}
	\alpha &= 1 + c_{1} (a-a_{\text{tc}}),
	\\
	\beta &= c_{2}(a-a_{\text{tc}}),
	\\
	\gamma &= \gamma_{0} + c_{3} (a-a_{\text{tc}}),
	\label{eq:critexponent}
\end{align}
where $c_i$ are some constants. In particular $c_1=(2D/\sigma^4)\langle x^{4}\rangle - \langle x^{2}\rangle^{2}>0$. Hence we are in phase II and the leading behavior for the stable fixed points is according to Eq.~\eqref{eq:phase2}
\begin{align}
	m_{\pm} \approx \pm\sqrt{\tilde{c}_{1} (a-a_{\text{tc}}) + \sqrt{ \tilde{c}_2 (a-a_{\text{tc}}) } } \approx \hat{c} (a-a_{\text{tc}})^{1/4}
	\label{eq:critexponent2}
\end{align}
with some positive constants $\tilde{c}_1, \tilde{c}_2, \hat{c}$. Thus the critical exponent is $1/4$ at the tricritical point.

In Fig.~\ref{fig:phasediagram}a we present the phase diagram in the $(a,b)$ plane. The phase boundaries have been obtained by numerical evaluation of the corresponding conditions. The discontinuous transition between phase I and phase III was obtained from numerical evaluations of the full self-consistency map \eqref{eq:selfconsistencymap} (solid blue line) and as a solution of the approximation \eqref{eq:saddlenode} close to the tricritical point which can not be distinguished on the scale of the plot. In Fig.~\ref{fig:phasediagram}b  the self-consistency map is shown with one representative for each phase. 

\subsection{The tricritical point}

The point where all three phases meet is called the tricritical point. It is determined as the intersection point of the lines defined by
\begin{align}
	F'(0)=1, \qquad F'''(0)=0
	\label{eq:definitiontricriticalpoint}
\end{align}
which is equivalent to
\begin{align}
	\langle x^{2}\rangle= \frac{\sigma^{2}}{2D}, \qquad \langle x^{4}\rangle-3\langle x^{2}\rangle^{2}=0,
	\label{eq:tricriticalpointcondition}
\end{align}
confer Eq.~(\ref{eq:selfconsistencyderivatives}). We used that $m=0$ is the only fixed point of the self-consistency map at the tricritical point and hence all odd moments are zero.
Therefore at the tricritical point we know the second and the fourth moment. Analogously to Eq.~\eqref{eq:relmomens2} we find for $k=0,1,2, \dots$ 
\begin{align}
	\frac{\sigma^{2}}{2}(k+1)\langle x^{k}\rangle_{m=0}=&\langle x^{k+6}\rangle_{m=0} - b\langle x^{k+4}\rangle_{m=0}
	\label{eq:momentrelationtwononlinearities}\\
	&-(a-D)\langle x^{k+2}\rangle_{m=0} \notag
\end{align}
from partial integration in the definition of the $k$th moment.
Hence at the tricritical point all moments can be calculated from Eqs.~(\ref{eq:tricriticalpointcondition}, \ref{eq:momentrelationtwononlinearities}). In particular we find
\begin{align}
	\langle x^{6}\rangle_{\text{tc}}=3b\Big(\frac{\sigma^{2}}{2D}\Big)^{2} + a \frac{\sigma^{2}}{2D}.
	\label{eq:sixthmoment}
\end{align}
For any extended probability distribution we have due to the Cauchy-Schwarz inequality
\begin{align}
	\langle x^{4}\rangle^{2} < \langle x^{2}\rangle\langle x^{6}\rangle,
	\label{eq:cauchyschwarz}
\end{align}
which leads with Eq.~\eqref{eq:tricriticalpointcondition} to
\begin{align}
	\langle x^{6}\rangle_{\text{tc}} - 9\langle x^{2}\rangle^{3}_{\text{tc}} > 0 .
	\label{eq:inequalitysixthmoment1}
\end{align}
\begin{figure}[H]
	\includegraphics{./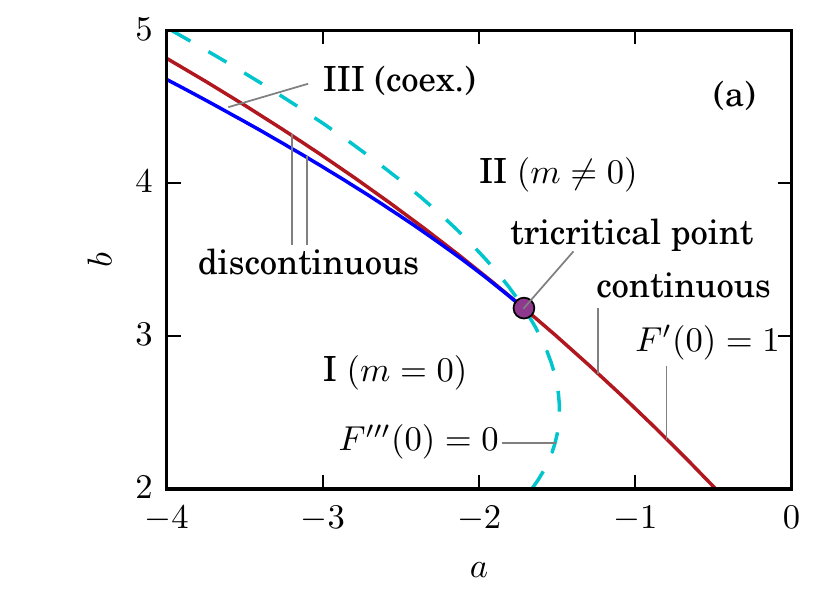}
	\includegraphics{./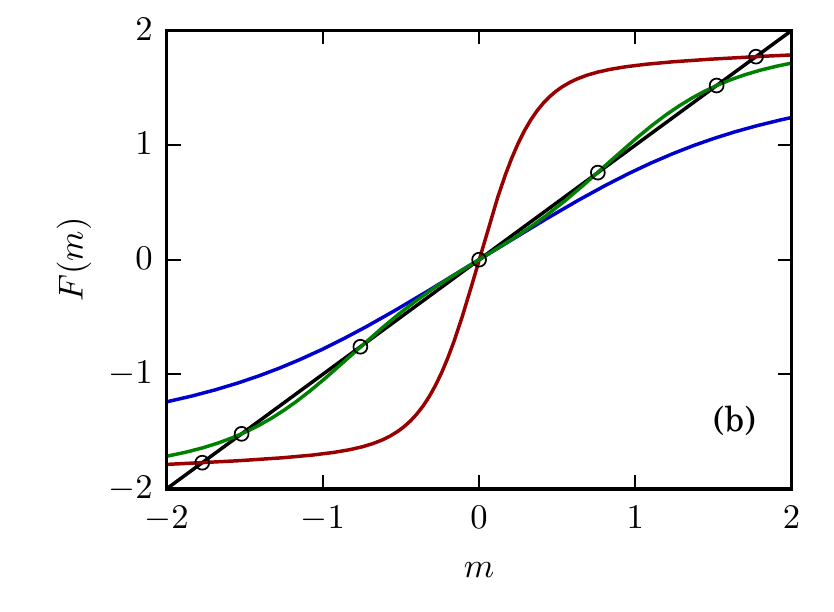}
	\caption{(color online) Phase diagram and self-consistency map for competing quartic and sixtic nonlinearities, cf. Eq.~\eqref{eq:nonlinearsystem3}, for $D=\sigma=1$.
	(a) Phase diagram of the infinite system in the $(a,b)$-plane. The lines indicating a continuous transition (red, between phase I and phase II) and a discontinuous transition (blue, red between phase II and phase III) meet in the tricritical point (purple spot). The tricritical point lies at the intersection of the curves defined by $F'(0)=1$ (red line) and $F'''(0)=0$ (turquoise dashed line). The lines indicating the transitions separate the trivial phase I where $m=0$ from the phase II  where $m$ assumes one of the nonzero values $m_{+}$ or $m_{-}=-m_{+}$ (symmetry breaking), and from the phase III where the trivial solution coexists with solutions of broken symmetry.
	(b) Self-consistency map, cf. Eq.~\eqref{eq:selfconsistencymap}, with representatives for each of the phases. The fixed points are indicated by circles. Phase I: blue line with a single fixed point $m=0$ which is stable, $(a,b)=(-1.2, 2.5)$. Phase II: red line with three fixed points, the nonzero ones are stable, $(a,b)=(-0.5, 3.5)$. Phase III: green line with five fixed points, where $m=0$ and the two outermost ones are stable, $(a,b)=(-1.2, 2.5)$.
\label{fig:phasediagram}}
\end{figure}
On the other hand we prove in Appendix \ref{app:D} that
\begin{align}
	\langle x^{6}\rangle_{\text{tc}} - 15 \langle x^{2}\rangle^{3}_{\text{tc}} < 0.
	\label{eq:inequalitysixthmoment2}
\end{align}
Note that this is the same as \eqref{eq:momentrelation1} for $n=6$ albeit it holds for different reasons here.
These moment inequalities translate with Eqs.~(\ref{eq:tricriticalpointcondition}, \ref{eq:sixthmoment}) into bounds for the tricritical point, namely
\begin{align}
	9 \Big( \frac{\sigma^2}{2D} \Big)^2 <a_{\text{tc}}+ 3b_{\text{tc}} \frac{\sigma^2}{2D}  < 15 \Big( \frac{\sigma^2}{2D} \Big)^2.
\label{eq:tricritbounds}
\end{align}
These bounds are optimal in the sense that they are asymptotically assumed in the limits of strong and weak noise, which can be shown using a similar series ansatz as used in Appendix \ref{app:C}. In Fig.~\ref{fig:tricritical} we present numerical results for the tricritical point and compare it to its bounds for strong and weak noise.
\begin{figure}[h]
	\includegraphics{./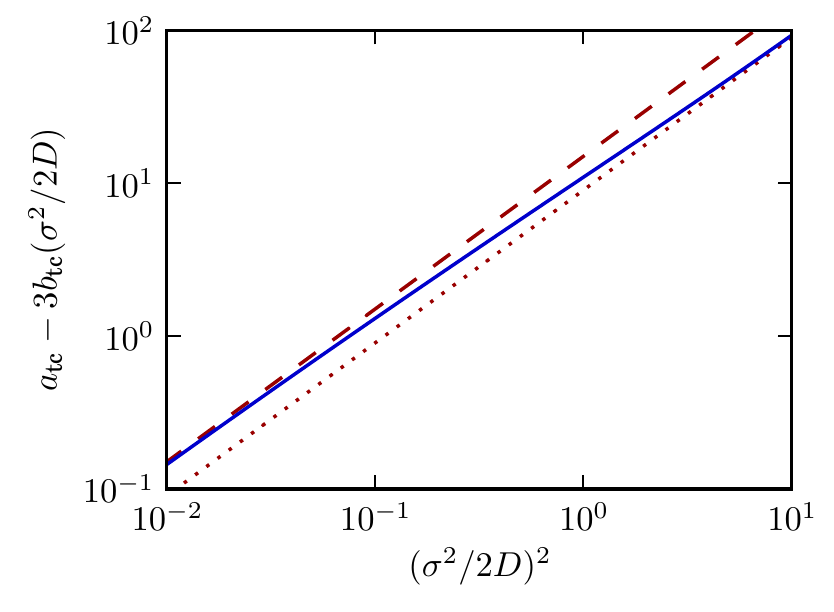}
	\caption{(color online) $a-3b \sigma^{2}/(2D)$ at the tricritical point $(a_{\text{tc}}, b_{\text{tc}})$ as a function of $(\sigma^{2}/2D)^{2}$ on a double logarithmic scale.
	Numerical results for $D=1$ and different noise strengths are displayed by the solid line (blue).
	Optimal bounds for the tricritical point are given in \eqref{eq:tricritbounds}. The upper bound (red dashed line) is asymptotically reached for weak noise, the lower bound (red dotted line) is asymptotically reached for strong noise.\label{fig:tricritical}}
\end{figure}
\subsection{Simulations in the region of coexistence}

Simulations are performed always for systems of finite size.
For finite systems the stationary solution $p_{\text{s}}(\mathbf{x})$ of the nonlinear Fokker-Planck equation \eqref{eq:nonlinearfokkerplanck} given by Eq.~\eqref{eq:statsolutiongeneral} is symmetric with respect to zero.
For the infinite system, as just discussed in detail, there is a breaking of symmetry.
There exists a coexistence region \footnote{
Coexistence in this system is not meant as coexistence of phases in the sense that part of the system is in one phase and part of the system is in another phase at the same time. Such phenomena are impossible in the present system. The physical reason for that is the global coupling. Since each coordinate is interacting with all others they relax to one common stationary distribution. Mathematically it can be understood by the nonlinear Fokker-Planck equation \eqref{eq:nonlinearfokkerplanck}. Due to the nonlinear term there is no superposition principle and the complete system can be only in one of the stationary solutions. That means by coexistence we refer to the coexistence of different solutions of the nonlinear Fokker-Planck equation and in particular solutions of very different character that can be symmetric or symmetry-breaking.
} in parameter space where two kinds of stationary solutions of the nonlinear Fokker-Planck equation \eqref{eq:nonlinearfokkerplanck} occur, a symmetric one and two symmetry-broken solutions.

We are interested in simulations of -necessarily- finite systems in the coexistence region of the infinite system.
It is useful to investigate the center of mass
\begin{align}
	R:= \frac{1}{L}\sum_{i=1}^{L} x_{i}
	\label{eq:centerofmass}
\end{align}
since in the limit of large system sizes it converges to the mean field
\begin{align}
	\lim_{L \rightarrow \infty} R = m.
	\label{eq:limitcom}
\end{align}
For finite systems $R$ is a stochastic variable with stationary distribution
\begin{align}
	p_{\text{s}}(R)= \int_{\mathbb{R}^{L}}^{} \diff \mathbf{x} \delta\Big(R-\frac{1}{L}\sum_{i=1}^{L} x_{i}\Big) p_{\text{s}}(\mathbf{x}).
	\label{eq:comdistr}
\end{align}
We expect that $R$ fluctuates around stable fixed points of the self-consistency-map \eqref{eq:selfconsistencymap} which means that in the region of coexistence the distribution has three peaks. It is interesting to investigate by simulations which of these peaks have the largest weights.

As a standard simulation needs to much time to reach the stationary distribution we use the method of patchwork sampling \cite{KB16} where the state space is cut into many patches that are then simulated separately. Eventually the results from the simulations of each piece are used to obtain the stationary distribution of the original model. In the present system the potential \eqref{eq:nonlinearsystem3} can have up two three minima that are separated by up to two potential barriers. Each particle might need a long time to overcome such a potential barrier. That is why the stationary distribution is assumed only after a very long time.
We consider the case of two local potential maxima at $x_{\text{max},+}>0$ and $x_{\text{max},-}<0$, with
$x_{\text{max}, +}=-x_{\text{max},-}$. Then we use the following partition of the state space into patches labeled by indices $k, l$. A configuration of the state space belongs to the patch $X_{kl}$ if there are exactly $k$ coordinates $x_i$ with $x_i<x_{\text{max},-}$ and exactly $l$ coordinates $x_{j}$ with $x_{j}>x_{\text{max}, +}$.

In Fig.~\ref{fig:patchwork} we present simulation results for the center of mass distribution for system size $L=200$ from the region of coexistence and slightly beyond. In the region of coexistence the distribution has three local maxima which correspond to the stable fixed points of the self-consistency map. However the weight of these three peaks can differ dramatically.
We find two typical scenarios. 

In the first one, presented in Fig.~\ref{fig:patchwork}a, the peaks around the two stable nonzero fixed points of the self-consistency map dominate the peak around zero.
Coming closer to the boundary towards the symmetry-broken phase of the infinite system (phase II), i.e. for decreasing $a$, the central peak looses weight and crossing the boundary it disappears completely.

In the second scenario, see Fig.~\ref{fig:patchwork}b, the central peak dominates the outer peaks.
Coming closer to the boundary towards the symmetric phase of the infinite system (phase I), i.e. for increasing $a$, the outer peaks loose weight and disappear crossing the boundary.

In Fig.~\ref{fig:patchwork2} we present the scaling of the center of mass distribution for different system sizes $L$ for the two scenarios.
In either case the weights of the suppressed peaks decrease at least exponentially with increasing $L$, see also Fig.~\ref{fig:patchwork3}.

We expect that in the limit $L\rightarrow \infty$ the weight of either the central peak (scenario one) or the outer peaks (scenario two) goes to zero. 
We conjecture that the region of coexistence can be divided into two parts separated by some critical line, where in one part scenario one is springing into action and in the other part scenario two.
It is an interesting topic of future research to calculate this critical line.

We further see on Fig.~\ref{fig:patchwork2} that all peaks, also the suppressed ones, become sharper for larger system sizes.
Hence locally the suppressed peak gains stability whereat it globally loses stability.
That means if we consider larger and larger systems that are prepared initially in the suppressed state, they will stay there longer and longer.
If we however wait infinite long each time, then the probability to find the system in the suppressed state will go to zero for large system sizes.

We also simulated the system sizes $L=25, 50, 100$ for the parameters belonging to the curves of Fig.~\ref{fig:patchwork}.
For parameters $a$ different from the ones of Fig.~\ref{fig:patchwork2} the decay of the suppressed peak is even much stronger.
In that sense the simulation results presented in Figs.~\ref{fig:patchwork2} and \ref{fig:patchwork3} are the most critical.
That means the parameters $a$ chosen in Fig.~\ref{fig:patchwork3} are the ones closest to the boundary separating scenario one from scenario two.

\begin{figure}[]
	\includegraphics{./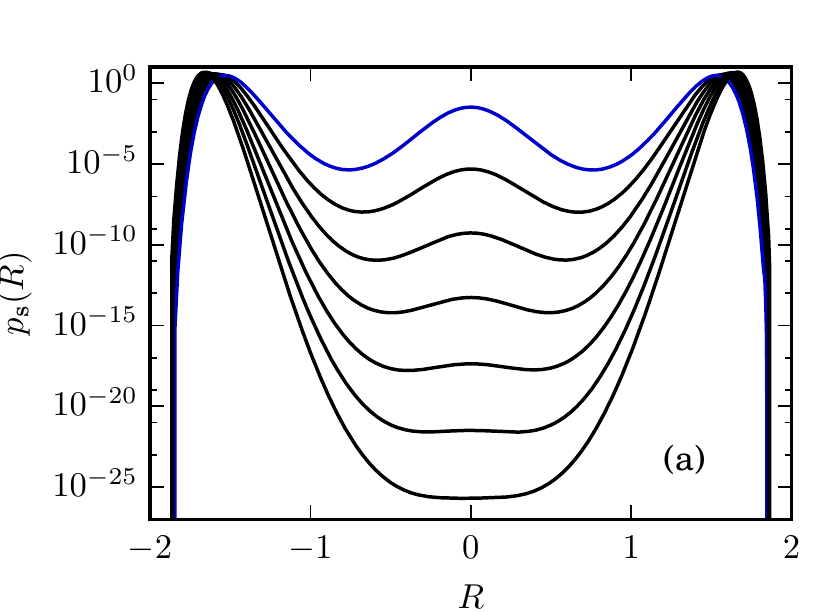}
	\includegraphics{./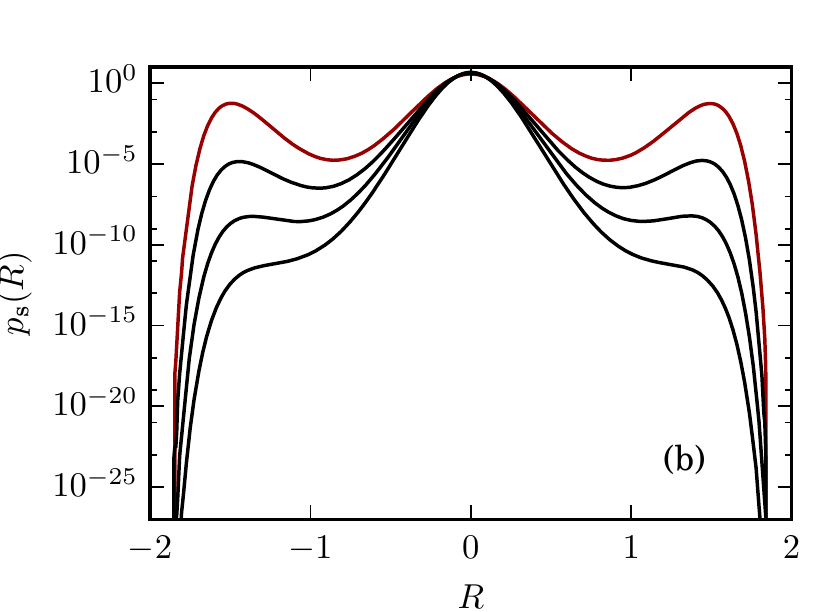}
	\caption{(color online) Stationary center of mass distribution $p_{\text{s}}(R)$ for $L=200$, $D=1$, $\sigma=1$, $b=4.5$ and (a) $a=-3.48, -3.50, \dots, -3.60$. The top solid (blue) line presents $a=-3.60$ and lower lines show higher values of $a$. System parameters are such that the infinite system is in phase III except for the lowest line where it is in phase II.
	(b) $a=-3.62, -3.64, -3.66, -3.68$. The top solid (red) line presents $a=-3.62$ and lower lines show smaller values of $a$. System parameters are such that the infinite system is in phase III except for the lowest line where it is in phase I. Simulations obtained via patchwork sampling with time step size $\Delta t = 10^{-3}$ and $10^{7}$ time steps for each patch.\label{fig:patchwork}}
\end{figure}
\begin{figure}[]
	\includegraphics{./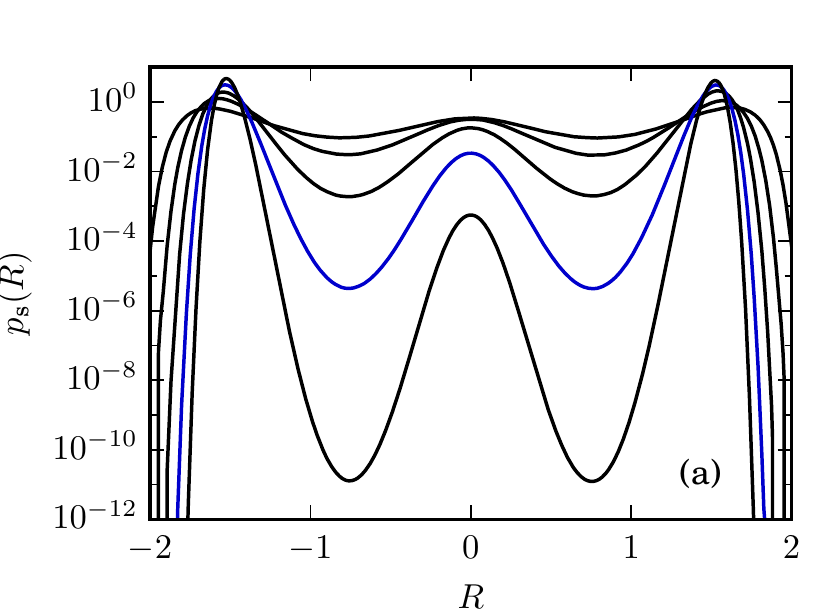}
	\includegraphics{./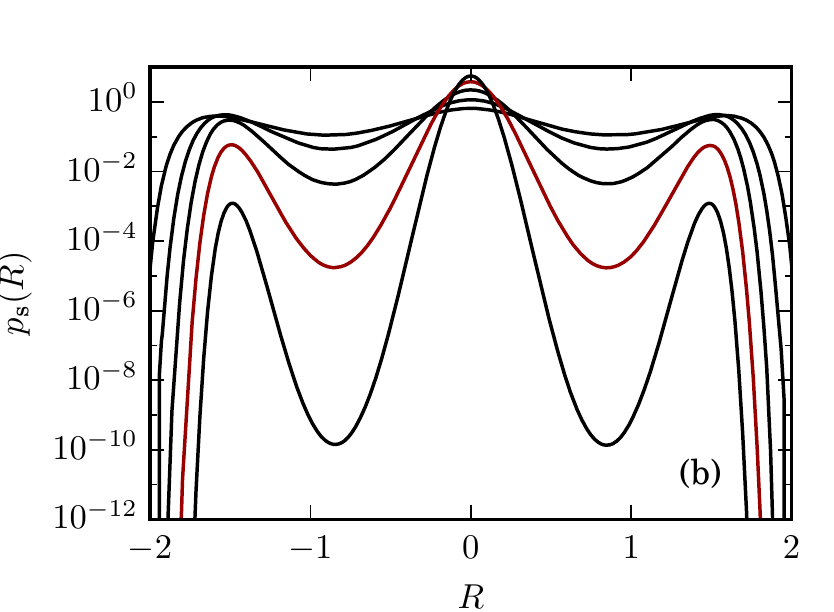}
	\caption{(color online) Stationary center of mass distribution $p_{\text{s}}(R)$ for $D=1$, $\sigma=1$, $b=4.5$ and $L=25, 50, 100, 200, 400$ for (a) $a=-3.60$ and (b) $a=-3.62$. The system size $L=200$ is colored (blue/red) and represents the same data as the colored curves in Fig.~\ref{fig:patchwork}. Upper lines at $R=1$ correspond to smaller system sizes and the lowest line corresponds to $L=400$. Simulations obtained via patchwork sampling with time step size $\Delta t = 10^{-3}$ and $10^{7}$ time steps for each patch.
\label{fig:patchwork2}}
\end{figure}
\begin{figure}
	\includegraphics{./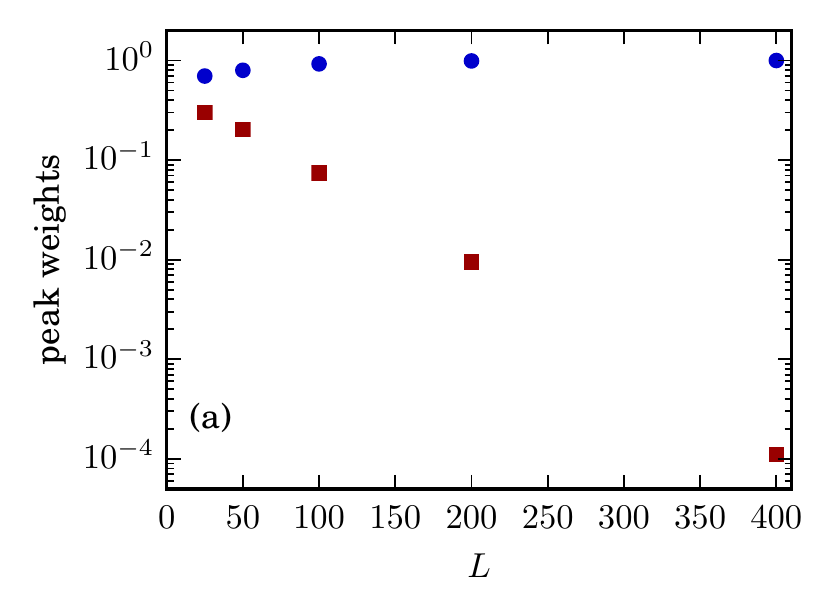}
	\includegraphics{./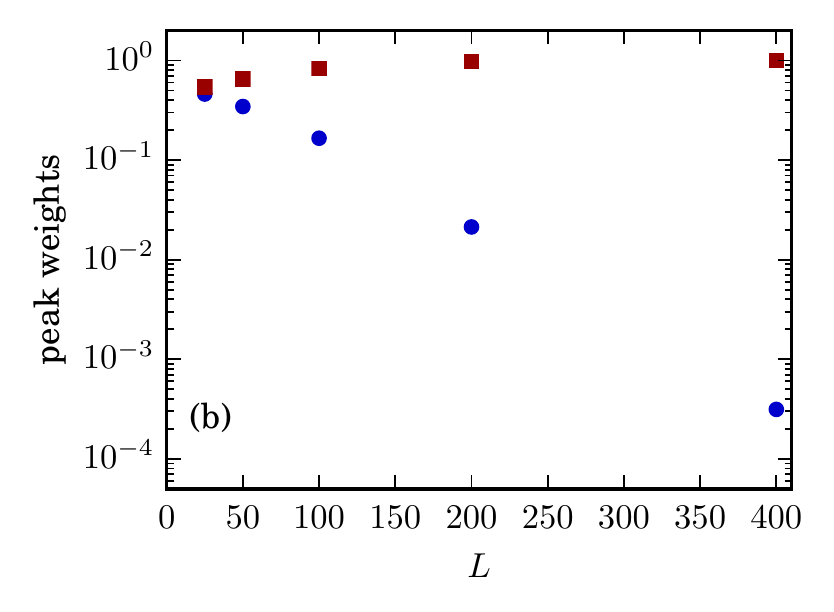}
	\caption{(color online) 
	Peak weights of the central peak (red squares) and the sum of the outer peaks (blue circles) of the center of mass distribution of Fig.~\ref{fig:patchwork2} in dependence on the system size. System parameters are $D=1$, $\sigma=1$, $b=4.5$ and (a) $a=-3.60$ and (b) $a=-3.62$. In (a) the weight of the central peak tends to zero at least exponentially and the sum of the weights of the outer peaks tends to one. In (b) the sum of the weights of the outer peaks tends to zero at least exponentially and the weight of the central peak tends to one.
\label{fig:patchwork3}}
\end{figure}

\section{Conclusions\label{sec:conclusions}}

We considered the overdamped motion of infinitely many globally coupled particles in an anharmonic potential under the influence of additive Gaussian white noise.
In the case of a double-well potential there is a continuous symmetry breaking phase transition.
We proved the existence and uniqueness of the critical point, i.e., there are no reentrance transitions.
We further proved optimal upper and lower bounds for the critical point that are assumed in the limits of weak and strong noise.

In case of a potential with three minima the phase space divides into three regions.
In phase I there exists only one symmetric solution.
In phase II there are two symmetry breaking solutions and in phase III symmetric and symmetry breaking solutions coexist.
All three phases meet in one point, the tricritical point.
We proved optimal bounds for this point that are assumed for weak and strong noise.

We further investigated the region of coexistence, phase III, by simulations of the system of finite size.
We sampled the stationary probability distribution of the center of mass.
In the limit of infinite system size the center of mass becomes the mean field which is deterministic.
In contrast, for finite systems the center of mass is stochastic.
We found that the mean field of each stationary solution of the infinite system corresponds to a local maximum of the center of mass distribution of finite but large systems.
Hence in the region of coexistence there are three local maxima and we investigated the weights of these three peaks.
Our simulation results strongly suggest that the weights of either the central peak or of the two outer peaks goes to zero at least exponentially when the system size goes to infinity.

Thus we found that in the region of coexistence the two limits of infinite system size and infinite observation time do not commute.
When the limit of infinite system size is performed first we find two symmetry-broken and one symmetric solution.
When on the other hand the limit of infinite observation time that means stationarity, is performed first the stationary center of mass distribution, depending on parameters, consists of either one peak at zero or two peaks located at the two stable fixed points of the mean field of the infinite system.

Naturally, the question arises which of the two limits is physically more relevant.
In a typical situation with very large system size the infinite system limit should be considered first since a stochastic switching between the three local maxima of the center of mass distribution is typically not occurring at all within the observation time and the behavior of the system depends highly on initial conditions.
If however the system size is not too large and the observation time is long enough to switch between the local maxima, the other order of limits can become relevant and, depending on parameters, one type of solution occurs in this case.
It is important to know the complete stationary distribution also when a system is perturbed by an external time dependent signal
which allows switching between the peaks; see, e.g., \cite{HNV12}.

\begin{acknowledgments}
R. K. thanks the International Max Planck Research School, Mathematics in the Sciences, Leipzig for supporting part of this work with a scholarship.
\end{acknowledgments}

\appendix

\section{Monotonicity of the critical point\label{app:A}}

We will show that for a double-well potential with saturation term of order $n\ge 6$, cf. Eq.~\eqref{eq:nonlinearsystem2}, the function $a_{c}(\sigma^{2}, D)$ is strictly monotone with respect to both arguments. At the critical point we always have $m=0$, so we drop this subscript here.

From Eq.~\eqref{eq:PTCmomentrelation} we have at the critical point $\langle x^{2}\rangle= \sigma^{2}/(2D)$. Differentiating with respect to $D$ gives
\begin{align}
	\frac{\diff}{\diff D} \langle x^{2}\rangle= - \frac{\sigma^2}{2D^{2}}= -\frac{2}{\sigma^{2}} \langle x^{2}\rangle.
	\label{eq:diffxsqD}
\end{align}
On the other hand considering $\langle x^{2}\rangle\big( a_{c}(\sigma^2, D), \sigma^2, D   \big)$ we have
\begin{align}
	\frac{\diff}{\diff D} \langle x^{2}\rangle = \partial_{D} \langle x^{2}\rangle + \partial_{a_{c}} \langle x^{2}\rangle \partial_{D} a_{c}(\sigma^2, D).
	\label{eq:diffxsqD2}
\end{align}
Using the explicit expression for $p_{\text{s}}(x,0)$, cf. Eqs.~(\ref{eq:statsolutionnonlinearfp},\ref{eq:statnormalization}), we calculate
\begin{align}
	\partial_{a_{c}}\langle x^{2}\rangle = - \partial_{D}\langle x^{2}\rangle= \frac{1}{\sigma^{2}} \Big(  \langle x^{4}\rangle - \langle x^{2}\rangle^{2} \Big).
	\label{eq:partialderivatievacD}
\end{align}
Inserting this into \eqref{eq:diffxsqD2} we obtain with \eqref{eq:diffxsqD}
\begin{align}
	\partial_{D} a_{c}(\sigma^{2}, D) = \frac{\langle x^{4}\rangle - 3 \langle x^{2}\rangle^{2}}{\langle x^{4}\rangle - \langle x^{2}\rangle^{2}} <0,
	\label{eq:partialDac}
\end{align}
since the numerator is negative as stated by Lemma \ref{lemma:3} of Appendix \ref{app:B} and the denominator is positive for any extended distribution.

In a similar way we compute the dependence of $\langle x^{2}\rangle$ on $\sigma^2$. It is convenient to substitute $s:=2/\sigma^{2}$, then
\begin{align}
	\frac{\diff}{\diff s}\langle x^{2}\rangle = - \frac{1}{Ds^{2}} = - D \langle x^{2}\rangle^{2}.
	\label{eq:diffxsqs}
\end{align}
On the other hand
\begin{align}
	\frac{\diff}{\diff s}\langle x^{2}\rangle = \partial_{s} \langle x^{2}\rangle + \partial_{a_{c}} \langle x^{2}\rangle \partial_{s} a_{c}(s,D).
	\label{eq:diffxsqs2}
\end{align}
Similar as above we calculate using Eqs.~(\ref{eq:statsolutionnonlinearfp},\ref{eq:statnormalization}) for $p_{\text{s}}(x,0)$
\begin{align}
	\partial_{s}\langle x^{2}\rangle=& \frac{a_{c}-D}{2}\big(  \langle x^{4}\rangle - \langle x^{2}\rangle^{2} \big)
	\notag
	\\
	&-\frac{1}{n} \big( \langle x^{n+2}\rangle - \langle x^{n}\rangle \langle x^{2}\rangle   \big).
	\label{eq:partialderivativesxsq}
\end{align}
To compute $\langle x^{n}\rangle$ and $\langle x^{n+2}\rangle$ we evaluate Eq.~\eqref{eq:relmomens2} for $k=0$ and $k=2$, observe $\sigma^{2}/(2D)= \langle x^{2}\rangle$ and arrive at 
\begin{align}
	\langle x^{n}\rangle &= a_{c}\langle x^{2}\rangle, 
	\\
	\langle x^{n+2}\rangle &= 3D \langle x^{2}\rangle^{2} + (a_{c} -D) \langle x^{4}\rangle.
	\label{eq:highermoments}
\end{align}
With these moments and Eqs.~(\ref{eq:partialderivatievacD},\ref{eq:diffxsqs}-\ref{eq:partialderivativesxsq}) we obtain
\begin{align}
	&\partial_{s}a_{c}(s,D)
	\label{eq:monotonicity2}
	\\
	&= \frac{ \big(\frac{2}{n} -1 \big) \Big[a_{c}\big(\langle x^{4}\rangle - \langle x^{2}\rangle^{2} \big)- D\big(\langle x^{4}\rangle - 3\langle x^{2}\rangle^{2} \big) \Big] }{s \big( \langle x^{4}\rangle - \langle x^{2}\rangle^{2} \big)}<0.
	\notag
\end{align}
The right hand side is negative since $2/n-1<0$, $\langle x^{4}\rangle - \langle x^{2}\rangle^{2} >0$ for any extended distribution, $\langle x^{4}\rangle - 3 \langle x^{2}\rangle^{2}<0$ as stated by Lemma \ref{lemma:3}, $D>0$ and $a_{c}>0$, cf. \eqref{eq:boundsonac}.
Thus $\partial_{\sigma^{2}} a_{c}(\sigma^2, D)>0$.
\section{Moment inequalities\label{app:B}}

We prove the moment inequalities (\ref{eq:momentrelation1},\ref{eq:momentrelation2}).
The proof of \eqref{eq:momentrelation1} is a consequence of the Cauchy-Schwarz inequality. If $n/2$ is even we have for any extended distribution
\begin{align}
	\langle x^{n/2}\rangle^{2} < \langle x^{n}\rangle
	\label{eq:cauchyschwarz1}
\end{align}
and if $n/2$ is odd
\begin{align}
	\langle x^{n/2+1}\rangle^{2}< \langle x^{2} \rangle \langle x^{n}\rangle.
	\label{eq:cauchyschwarz2}
\end{align}
For $n=4$ the claim \eqref{eq:momentrelation1} is directly given by \eqref{eq:cauchyschwarz1} and for $n> 4$ it follows by induction from \eqref{eq:cauchyschwarz1} or \eqref{eq:cauchyschwarz2}.

The proof of \eqref{eq:momentrelation2} uses a comparison between $p_{\text{s}}(x, m=0)$ and a Gaussian, for which the corresponding expression of the left-hand-side of \eqref{eq:momentrelation2} is zero. To show this we use the following statement:

\begin{lemma}\label{lemma:1}
	For any even probability distribution and for any fixed even $l\ge 4$ there exist coefficients $\alpha^{(l)}_{ij} \ge 0$ only depending on the second moment of the distribution, such that
\begin{align}
	\langle x^{l} \rangle - (l-1)!!\langle x^{2}\rangle^{l/2} = \kappa_{l} + \sum_{i=2}^{l-2}\sum_{j=4}^{l-2}\alpha^{(l)}_{ij} \langle x^i\rangle \kappa_{j} ,
	\label{eq:reprcumulant}
\end{align}
where $\kappa_{k}$ is the $k$th cumulant and the sums are meant to be zero if the lower bound on the summation index is larger than the upper bound.
\end{lemma}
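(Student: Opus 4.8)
The plan is to prove the identity by induction on the even integer $l$, with the standard moment--cumulant recursion as the engine. Writing $m_k:=\langle x^k\rangle$, this recursion reads $m_l=\sum_{k=1}^{l}\binom{l-1}{k-1}\kappa_k\,m_{l-k}$ (it follows by comparing coefficients in the relation $M'(t)=K'(t)M(t)$ between the moment and cumulant generating functions). For an even distribution all odd moments and all odd cumulants vanish, so only even $k$ survive and, separating the extreme terms $k=l$ and $k=2$,
\begin{align}
m_l=\kappa_l+(l-1)\kappa_2\,m_{l-2}+\sum_{\substack{k=4\\ k\ \mathrm{even}}}^{l-2}\binom{l-1}{k-1}\kappa_k\,m_{l-k}.
\notag
\end{align}

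For the base case $l=4$ one uses $m_2=\kappa_2$ (valid since the distribution is even, hence centred) to get $m_4=\kappa_4+3\kappa_2^2=\kappa_4+3!!\,\langle x^2\rangle^2$, i.e.\ the asserted identity with an empty double sum, consistent with the convention stated in the lemma. For the inductive step I assume the representation for $l-2$, namely $m_{l-2}=(l-3)!!\,\langle x^2\rangle^{(l-2)/2}+\kappa_{l-2}+\sum_{i,j}\alpha^{(l-2)}_{ij}m_i\kappa_j$ with nonnegative coefficients depending only on $\langle x^2\rangle$, and substitute it into the term $(l-1)\kappa_2 m_{l-2}$ above. Since $\kappa_2=\langle x^2\rangle$: the Gaussian part $(l-3)!!\,\langle x^2\rangle^{(l-2)/2}$ contributes $(l-1)(l-3)!!\,\langle x^2\rangle^{l/2}=(l-1)!!\,\langle x^2\rangle^{l/2}$, which is exactly the term moved to the left-hand side; the cumulant $\kappa_{l-2}$ contributes $(l-1)\langle x^2\rangle\,\kappa_{l-2}$, admissible with $i=2$, $j=l-2$ (both indices in range once $l\ge 6$; for $l=4$ this term is simply absent because $m_2=\kappa_2$ carries no excess over the Gaussian, matching the base case); and the remaining part contributes $(l-1)\langle x^2\rangle\sum_{i,j}\alpha^{(l-2)}_{ij}m_i\kappa_j$, again nonnegative, $\langle x^2\rangle$-dependent, with indices still inside $[2,l-2]\times[4,l-2]$. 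The leftover sum over even $k$ with $4\le k\le l-2$ is already of the required shape, with constant coefficients $\binom{l-1}{k-1}\ge 0$, $j=k\in[4,l-2]$ and $i=l-k\in[2,l-4]$. Collecting the terms with a common pair $(i,j)$ and subtracting $\kappa_l$ and $(l-1)!!\langle x^2\rangle^{l/2}$ from $m_l$ gives the claimed formula with $\alpha^{(l)}_{ij}\ge 0$ depending on $\langle x^2\rangle$ only.

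The whole argument is essentially bookkeeping; the two places that genuinely need care are (i) checking that every index produced stays within the stated ranges --- in particular that the index $j=l-2$ coming from the $\kappa_{l-2}$ term is legitimate precisely for $l\ge 6$, which dovetails with the $l=4$ base case --- and (ii) making sure no coefficient ever acquires a dependence on moments higher than the second, which holds because the only distribution-dependent factor introduced along the way is $\kappa_2=\langle x^2\rangle$, and the inductive hypothesis already confines the $\alpha^{(l-2)}_{ij}$ to $\langle x^2\rangle$. A more combinatorial route would instead use the set--partition formula $m_l=\sum_{\pi}\prod_{B\in\pi}\kappa_{|B|}$: the contribution of the perfect matchings is exactly $(l-1)!!\kappa_2^{l/2}$, so the left-hand side equals the sum over partitions having at least one block of size $\ge 3$, which can be reorganised by extracting one such block; but the recursive proof above avoids partition combinatorics and is more self-contained.
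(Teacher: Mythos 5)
Your proof is correct and follows essentially the same route as the paper: induction on even $l$ via the moment--cumulant recursion $\langle x^l\rangle=\sum_{k}\binom{l-1}{k-1}\kappa_k\langle x^{l-k}\rangle$, isolating the $(l-1)\kappa_2\langle x^{l-2}\rangle$ term, substituting the induction hypothesis so that $(l-1)(l-3)!!\langle x^2\rangle^{l/2}=(l-1)!!\langle x^2\rangle^{l/2}$ reproduces the Gaussian part, and absorbing everything else into nonnegative coefficients depending only on $\langle x^2\rangle$ (the paper even records the resulting recursion $\alpha^{(l)}_{ij}=\binom{l-1}{j-1}\delta_{i,l-j}+(l-1)\langle x^2\rangle\,\alpha^{(l-2)}_{ij}$ and the merged coefficient $\alpha^{(l)}_{2,l-2}=\tfrac12 l(l-1)$). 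The only cosmetic difference is that you derive the recursion from $M'(t)=K'(t)M(t)$ while the paper obtains it from $\partial_{\kappa_j}\langle x^n\rangle=\binom{n}{j}\langle x^{n-j}\rangle$.
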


\begin{proof}

By induction. Base case: for $l=4$ the claim holds as
\begin{align}
	\langle x^4\rangle -3 \langle x^2\rangle^2 = \kappa_4 .
	\label{eq:basecase}
\end{align}
For distributions with finite moments we have the general relation \cite{SO94}
\begin{align}
	\partial_{\kappa_{j}} \langle x^{n}\rangle = \binom{n}{j} \langle x^{n-j}\rangle	
	\label{eq:fromkendalls}
\end{align}
from which we conclude with
\begin{align}
	\langle x^{n}\rangle = \sum_{j=1}^{\infty} \kappa_{j} \partial_{\kappa_{j}}\langle x^{n}\rangle
	\label{eq:kendallsconsequence}
\end{align}
that, cf. also, e.g., \cite{dasgupta08}
\begin{align}
	\kappa_n = \langle x^n\rangle - \sum_{k=1}^{n-1} \binom{n-1}{k-1} \kappa_{k} \langle x^{n-k}\rangle.
	\label{eq:recursioncumulants}
\end{align}
Note that for even distributions $\kappa_{i}=0$ for all odd $i$. 
Hence we find for even $l\ge 6$
\begin{align}
	\kappa_{l} &=  \langle x^l\rangle - (l-1) \langle x^{2}\rangle \langle x^{l-2}\rangle - \sum_{k=4}^{l-1} \binom{l-1}{k-1}\kappa_k \langle x^{l-k}\rangle
	\label{eq:indstep}
\end{align}
By induction hypothesis Eq.~\eqref{eq:reprcumulant} holds if $l$ is replaced by $l-2$. Then, substituting $\langle x^{l-2}\rangle$ in Eq.~\eqref{eq:indstep} leads to
\begin{align}
	&\kappa_l - \langle x^l\rangle + (l-1)!!\langle x^2\rangle^{l/2} = -(l-1)\langle x^2\rangle\kappa_{l-2} \notag \\ 
	& - (l-1)\langle x^2\rangle\sum_{i=2}^{l-4} \sum_{j=4}^{l-4} \alpha^{(l-2)}_{ij}\langle x^i\rangle \kappa_j 
	-\sum_{k=4}^{l-1} \binom{l-1}{k-1}\kappa_k \langle x^{l-k}\rangle \notag \\
	&= -\sum_{i=2}^{l-2} \sum_{j=4}^{l-2} \alpha^{(l)}_{ij}\langle x^i\rangle\kappa_j.
	\label{eq:inductionresult}
\end{align}
Comparing coefficients we find for even $l\ge 8$ and $i \in (2, 4, \dots, l-4)$, $j \in (4, 6, \dots, l-4)$ the recursion
\begin{align}
	\alpha^{(l)}_{ij} &= \binom{l-1}{j-1}\delta_{i,l-j} + (l-1) \langle x^2\rangle \alpha^{(l-2)}_{ij} \ge 0.
	\label{eq:alphaexplicit2}
\end{align}
For even $l\ge 6$ we have
\begin{align}
	\alpha^{(l)}_{2, l-2} &= \frac{1}{2} l \left( l-1 \right) >0 .
	\label{eq:alphaexplicit1}
\end{align}
All other $\alpha^{(l)}_{ij}$ are zero. Thus the proof is complete.
\end{proof}
Although not needed here, the coefficients $\alpha_{ij}^{(l)}$ can be calculated as well. Eq.~\eqref{eq:alphaexplicit1} gives the initial value $\alpha^{(6)}_{2,4}=15$ for an iterative solution of Eq.~\eqref{eq:alphaexplicit2}.

Following a method of Dyson \cite{Dyson43} we introduce a Gaussian probability distribution $g(x)$ that is normalized and has the same first and second moment as $p_{\text{s}}(x, m=0)$. We aim to construct a polynomial that has the same sign as $g(x)-p_{\text{s}}(x, m=0)$ when nonzero. For the construction of the polynomial the intersection points of $g(x)$ and $p_{\text{s}}(x, m=0)$ are crucial as they are the points where $g(x)-p_{\text{s}}(x, m=0)$ changes its sign. The following Lemma states that two intersection points are not possible.
\begin{lemma}\label{lemma:2}
If $f$ and $g$ are two even, normalized, continuous probability distributions with equal variance, and there exists $a>0$ such that
\begin{align}
	g(x) \le f(x)&  \text{ for } |x| \in [-a, a], 
	\notag \\
	g(x) \ge f(x)&  \text{ else}, 
	\notag
\end{align}
then they are identical $f\equiv g$.
\end{lemma}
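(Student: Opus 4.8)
The plan is to exploit the quadratic polynomial $q(x) := x^{2} - a^{2}$, which vanishes exactly at the two points $\pm a$ where the hypothesis forces the difference $g-f$ to change sign. First I would set $h := g - f$; this is an even, continuous function, and the hypothesis says $h(x) \le 0$ for $|x| \le a$ and $h(x) \ge 0$ for $|x| > a$. Since $q(x) \le 0$ for $|x|\le a$ and $q(x) \ge 0$ otherwise, the product $q(x)\,h(x)$ is nonnegative for every real $x$.

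The second step is to show that $\int_{\mathbb{R}} q(x)\,h(x)\,\diff x = 0$. Because $f$ and $g$ are even, their means vanish, so their variances coincide with their second moments; the assumption that $f$ and $g$ have equal variance therefore gives $\int_{\mathbb{R}} x^{2} h(x)\,\diff x = 0$, while normalization gives $\int_{\mathbb{R}} h(x)\,\diff x = 1 - 1 = 0$. Hence
\begin{align}
	\int_{\mathbb{R}} q(x)\,h(x)\,\diff x = \int_{\mathbb{R}} x^{2} h(x)\,\diff x - a^{2} \int_{\mathbb{R}} h(x)\,\diff x = 0 .
\end{align}
Splitting the integral is legitimate because $|h| \le f + g$ pointwise, so $\int_{\mathbb{R}} x^{2} |h(x)|\,\diff x$ is bounded by the sum of the two variances, which is finite.

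Finally, $q\,h$ is a continuous, everywhere nonnegative function whose integral over $\mathbb{R}$ vanishes, so it is identically zero. Consequently $h(x) = 0$ for every $x \ne \pm a$, and continuity of $h$ extends this to $x = \pm a$ as well, so that $f \equiv g$.

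I do not anticipate a genuine obstacle: the argument is a one-line sign observation combined with an elementary moment computation. The only points that deserve a word of care are the finiteness of the second moments of $f$ and $g$ --- needed for the displayed integral to make sense and to be split --- and the standard fact that a nonnegative continuous function with vanishing integral is identically zero. Conceptually, what makes the lemma work is that matching the zeroth and second moments of $f$ and $g$ is exactly enough to annihilate the single two-sign-change pattern allowed by the hypothesis; this is precisely the ingredient needed afterwards, following Dyson's construction, to conclude that the Gaussian $g$ and $p_{\text{s}}(x,m=0)$ cannot intersect in exactly two points.
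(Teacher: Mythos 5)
Your proof is correct and follows essentially the same route as the paper: both multiply the sign-definite difference $g-f$ by the quadratic $(x-a)(x+a)$ and use normalization together with equal variance to force the resulting nonnegative integral to vanish. The only cosmetic difference is that the paper phrases the last step as a proof by contradiction while you argue directly that a nonnegative continuous function with zero integral vanishes identically.
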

\begin{proof}
Consider the quadratic function $(x-a)(x+a)$. It has either the same sign as $g(x) - f(x)$ or at least one of them is zero. Hence
\begin{align}
	(x-a)(x+a)\big[g(x)-f(x)\big] \ge 0.
	\label{eq:cond1}
\end{align}
Assume there exists $x_0 \in \mathbb{R}$ with $f(x_0) \neq g(x_0)$. Then there exists an interval $(b,c)$ and $\varepsilon>0$ such that
\begin{align}
	(x-a)(x+a)\big[g(x)-f(x)\big] > \varepsilon
	\label{eq:cond2}
\end{align}
on $(b,c)$. Therefore
\begin{align}
	\int_{-\infty}^{\infty} \diff x \left( x^2 - a^2 \right)\big[g(x)-f(x)\big] > 0,
	\label{eq:ineqquadraticfunction}
\end{align}
which is a contradiction since $f$ and $g$ are normalized and have equal variance. Hence $f\equiv g$.
\end{proof}
Since $p_{\text{s}}(x, m=0)$ and $g(x)$ are both even they must have an even number of intersection points. Due to normalization zero intersection points are not possible. Since these distributions are not equal, due to Lemma~\ref{lemma:2}, also two intersection points are not possible. Six or more intersection points are not possible as well, since the intersection points are real solutions of the equation
\begin{align}
	p_{\text{s}}(x, m=0)=g(x).
	\label{eq:intersectionpoints}
\end{align}
Taking the exponential of both sides we find that the solutions are roots of the polynomial
\begin{align}
	\frac{1}{n}x^{n} + \beta x^{2} + \gamma,
	\label{eq:intersectionpointspolynomial}
\end{align}
where $\beta$ depends on the parameter $a$, the variance of $g(x)$, on normalization and on the noise strength, $\gamma$ is basically a normalization constant but also depends on the noise strength. The explicit expressions for $\beta$ and $\gamma$ can be given but are not important. From Descartes sign rule we find that there can not be more than two positive roots of the polynomial \eqref{eq:intersectionpointspolynomial} and due to symmetry there can not be more than two negative roots as well. Hence we find that there must be exactly four intersection points between $p_{\text{s}}(x, m=0)$ and $g(x)$ and we can apply the following Lemma.
\begin{lemma}\label{lemma:3}
If $f$ and $g$ are different, even, continuous and normalized probability distributions with equal variance and finite moments, and there exist $a_2>a_1>0$ such that
\begin{align}
	g(x) \le f(x)& \text{ for } |x| \in [a_{1}, a_{2}],
	\notag
	\\
	g(x) \ge f(x)& \text{ else},
	\notag 
\end{align}
then for all even $n \ge 4$ it holds
\begin{align}
	\langle x^n\rangle_{g} > \langle x^{n}\rangle_{f}.
	\label{eq:moments}
\end{align}
\end{lemma}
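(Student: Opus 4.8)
The plan is to adapt Dyson's comparison argument used in Lemma~\ref{lemma:2}: exhibit an even polynomial whose sign agrees pointwise with that of $g-f$, multiply by a nonnegative monomial, integrate, and then bootstrap from $n=4$ to all even $n$ through the resulting moment recursion. Write $D_k:=\langle x^k\rangle_g-\langle x^k\rangle_f$. The hypothesis says $g-f\ge 0$ exactly on $\{|x|\le a_1\}\cup\{|x|\ge a_2\}$ and $g-f\le 0$ on $\{a_1\le|x|\le a_2\}$; the polynomial $P(x):=(x^2-a_1^2)(x^2-a_2^2)$ has precisely this sign pattern (positive for $|x|<a_1$, negative for $a_1<|x|<a_2$, positive for $|x|>a_2$). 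Hence $P(x)\,[g(x)-f(x)]\ge 0$ for every $x$, and for even $n\ge 4$ the extra factor $x^{n-4}=(x^2)^{(n-4)/2}$ is nonnegative, so $x^{n-4}P(x)\,[g(x)-f(x)]\ge 0$ throughout. Since $f$ and $g$ are continuous and distinct, $\{g\neq f\}$ is a nonempty open set, hence of positive measure, so the integrand is strictly positive on a set of positive measure, giving
\[
 \int_{-\infty}^{\infty} x^{n-4}\,(x^2-a_1^2)(x^2-a_2^2)\,[g(x)-f(x)]\,\diff x>0 ,
\]
the integral being finite because both distributions have finite moments.

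Expanding $x^{n-4}(x^2-a_1^2)(x^2-a_2^2)=x^{n}-(a_1^2+a_2^2)x^{n-2}+a_1^2a_2^2\,x^{n-4}$ converts this inequality into the recursion
\[
 D_n-(a_1^2+a_2^2)\,D_{n-2}+a_1^2a_2^2\,D_{n-4}>0 ,\qquad n=4,6,8,\dots ,
\]
together with the initial data $D_0=0$ (both distributions are normalized) and $D_2=0$ (equal variance, the means being zero by evenness).

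To climb the ladder I would factor the recursion through its characteristic polynomial $\lambda^2-(a_1^2+a_2^2)\lambda+a_1^2a_2^2=(\lambda-a_1^2)(\lambda-a_2^2)$. Setting $E_n:=D_n-a_1^2 D_{n-2}$, a short computation shows the displayed inequality is exactly $E_n-a_2^2E_{n-2}>0$. Since $E_2=D_2-a_1^2D_0=0$ and $a_2^2>0$, induction on even $n$ gives $E_n>0$, i.e. $D_n>a_1^2 D_{n-2}$, for all even $n\ge 4$. Feeding this back and using $D_2=0$, a second induction gives $D_4>a_1^2D_2=0$, then $D_6>a_1^2D_4>0$, and so on, so that $D_n>0$ for all even $n\ge 4$; this is the claim $\langle x^n\rangle_g>\langle x^n\rangle_f$.

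The genuine content is this last maneuver. Integrating against $P$ alone controls only the fourth moment (it reproduces $D_4>0$ after using $D_0=D_2=0$); to reach the higher even moments one must insert the nonnegative monomial $x^{n-4}$ and then exploit that the resulting recursion factors over $(\lambda-a_1^2)(\lambda-a_2^2)$ with nonnegative roots. The remaining points — verifying the sign pattern of $P$, the strictness coming from continuity of $g-f$, and the convergence of the integrals — are routine under the stated hypotheses.
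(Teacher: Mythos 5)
Your proof is correct and is essentially the paper's argument: your polynomials $x^{n-4}(x^2-a_1^2)(x^2-a_2^2)$ are exactly the paper's $q_k(x)=x^{2k}(x-a_1)(x+a_1)(x-a_2)(x+a_2)$ with $2k=n-4$, and your factorization of the three-term recursion through $(\lambda-a_1^2)(\lambda-a_2^2)$ to obtain $D_n>a_1^2D_{n-2}$ is precisely what the paper achieves by telescoping the weighted sum $Q_n(x)=\sum_{k}a_2^{n-4-2k}q_k(x)$ before integrating. The only difference is bookkeeping: you integrate each $q_k$ and then combine the resulting inequalities, while the paper combines the polynomials first and integrates once.
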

\begin{proof}
Consider the polynomials 
\begin{align}
q_k(x):=&x^{2k}(x-a_1)(x+a_1)(x-a_2)(x+a_2)
	\label{eq:pk} \\
	=& x^{2k+4} - (a_{1}^{2} + a_2^{2})x^{2k+2} +a_1^{2}a_{2}^{2} x^{2k},
\notag 
\end{align}
where $k=0,1,2,\dots$. These polynomials and the function $g(x)-f(x)$ have either the same sign or at least one of them is zero, hence
\begin{align}
	\int_{-\infty}^{\infty} \diff x q_k(x)\big[g(x)-f(x)\big] \ge 0.
	\label{eq:ineqpk1}
\end{align}
Since $f$ and $g$ are not identical and because of continuity we have even strict inequality
\begin{align}
	\int_{-\infty}^{\infty} \diff x q_k(x)\big[g(x)-f(x)\big] > 0.
	\label{eq:ineqpk2}
\end{align}
For $n=4$ the claim follows directly from the case $k=0$ as 
\begin{align}
	\int_{-\infty}^{\infty}\diff x q_{0}(x) \left(g(x)-f(x)\right) = \langle x^4\rangle_{g} -\langle x^4\rangle_{f} >0.
	\label{eq:indbasecase}
\end{align}
Note that the zeroth and second power of $x$ vanish in the above integral, since $f$ and $g$ are normalized and have equal variance. For even $n \ge 6$ consider the polynomial
\begin{align}
	Q_n(x) := \sum_{k=0}^{n/2-2} a_{2}^{n-4-2k}q_{k}(x) = \sum_{k=0}^{n/2} b_{2k}x^{2k}.
	\label{eq:polk}
\end{align}
One easily checks that 
\begin{align}
	b_{0}&= a_{1}^{2}a_{2}^{n-2},
	\notag
	\\
	b_{2} &= -a_{2}^{n-2},
	\notag
	\\
	b_{n-2} &= -a_{1}^{2},
	\notag
	\\
	b_{n} &= 1,
	\label{eq:coefficientspol}
\end{align}
and all other $b_{k}$ are zero.
Multiplication with $g-f$ and integration yields
\begin{align}
	0<&\int_{-\infty}^{\infty}\diff x  Q_n(x)\big[g(x)-f(x)\big] 
	\notag \\
	=&  \int_{-\infty}^{\infty} \diff x \sum_{k=2}^{n/2} b_{2k}x^{2k}\big[g(x)-f(x)\big].
	\label{eq:recursion1}
\end{align}
Inserting the coefficients \eqref{eq:coefficientspol} into inequality \eqref{eq:recursion1} we find
\begin{align}
	\langle x^{n}\rangle_{g} - \langle x^{n}\rangle_{f} > a_{1}^{2}(\langle x^{n-2}\rangle_{g}-\langle x^{n-2}\rangle_{f}),
	\label{eq:inductionstep}
\end{align}
where we used again that the zeroth and the second moment are equal for $f$ and $g$.
Thus the claim follows by induction.
\end{proof}
Taking $p_{\text{s}}(x, m=0)$ as $f(x)$ and a Gaussian with the same variance and mean as $g(x)$, Lemma \ref{lemma:3} can be applied since $p_{\text{s}}(x, m=0)$ is smaller than $g(x)$ for large enough $|x|$ and since they intersect in exactly four points. Hence
\begin{align}
	\langle x^{n}\rangle_{p_{\text{s}}}<\langle x^{n}\rangle_{g}.
	\label{eq:inequalityhighestmoment}
\end{align}
For a Gaussian all cumulants of higher than second order are zero. Hence due to Lemma~\ref{lemma:1}
\begin{align}
	\langle x^{n} \rangle_{g} - (n-1)!!\langle x^{2}\rangle_{g}^{n/2} = 0.
	\label{eq:momentrelationgaussian}
\end{align}
With $\langle x^{2}\rangle_{p_{\text{s}}}= \langle x^{2}\rangle_{g}$ and \eqref{eq:inequalityhighestmoment} it follows \eqref{eq:momentrelation2}.
By the same argument  we conclude that 
\begin{align}
	\kappa_{4} = \langle x^{4}\rangle - 3 \langle x^{2}\rangle^{2} <0.
	\label{eq:kurtosis}
\end{align}
\section{Strong and weak noise\label{app:C}}

We show that the upper bound in inequality \eqref{eq:boundsonac} is assumed asymptotically for weak noise $\sigma^2 \rightarrow 0$ and the lower bound is assumed asymptotically for strong noise $\sigma^2 \rightarrow \infty$.

For weak noise we use again the substitution $s:=2/\sigma^2$. From inequality \eqref{eq:boundsonac} we observe that $a_c \rightarrow 0$ as $\sigma^2 \rightarrow 0$. Therefore we make the ansatz
\begin{align}
	a_c(s, D) = a_{1}(D) \frac{1}{s} + a_2(D)\frac{1}{s^2}+\dots .
	\label{eq:ansatzacweaknoise}
\end{align}
The phase transition condition Eq.~\eqref{eq:PTCmomentrelation} is equivalent to
\begin{align}
	1&= 2D s \partial_{a_{1}} \ln I,
	\label{eq:ptcweaknoise}
	\\
	I&= \int_{-\infty}^{\infty} \diff x \exp\bigg[ -s\Big( \frac{D-a_c}{2} x^2 +\frac{1}{n} x^n \Big)   \bigg].
	\label{eq:laplace1}
\end{align}
From inequality \eqref{eq:boundsonac} we find that for fixed $D$ and large enough $s$ we have $D-a_c>0$. Hence we apply Laplace's method around $x=0$ to evaluate the integral \eqref{eq:laplace1}
\begin{align}
	I=& \int_{-\infty}^{\infty} \diff x \exp\Big( -s \frac{D-a_c}{2} x^2 \Big) \exp \Big( - \frac{s}{n} x^n \Big)
	\notag
	\\
	=&\int_{-\infty}^{\infty} \diff x \exp\Big( -s \frac{D-a_c}{2} x^2 \Big) \times \Big(1 - \frac{s}{n} x^n + \mathcal{O}(s^2 x^{2n})   \Big) 
	\notag
	\\
	=& \sqrt{\frac{2 \pi}{s(D-a_c)}} 
	\notag
	\\
	&\times \bigg[1 + \frac{(n-1)!!}{n} s^{1-n/2} \Big(\frac{1}{D-a_c} \Big)^{n/2} + \mathcal{O}(s^{2-n}) \bigg].
	\label{eq:laplace2}
\end{align}
Inserting the ansatz \eqref{eq:ansatzacweaknoise} we find
\begin{align}
	\partial_{a_{c}}\ln I =& \frac{1}{2s(D-a_c)}
	+ \bigg[\frac{(n-1)!!}{2}s^{-n/2} \Big(\frac{1}{D-a_{c}} \Big)^{1+ n/2} \bigg]
	\notag
	\\
	& \Big/ \bigg[1+ \frac{(n-1)!!}{n}s^{1-n/2} \Big(\frac{1}{D-a_{c}} \Big)^{n/2}\bigg] + \mathcal{O}(s^{-n/2})
	\notag
	\\
	=&\frac{1}{2s(D-a_c)} + \frac{1}{2D}(n-1)!! (sD)^{-n/2} 
	\notag
	\\
	&+ \mathcal{O}(s^{-n/2}).
	\label{eq:preptc}
\end{align}
Inserting this expression into the phase transition condition Eq.~\eqref{eq:ptcweaknoise} we obtain
\begin{align}
	1 = \frac{1}{1 - \frac{1}{D}a_{c}} + (n-1)!! (sD)^{1-n/2} \frac{1}{D} + \mathcal{O}(s^{-n/2}).
	\label{eq:ptcweaknoiseleading}
\end{align}
Inserting the ansatz \eqref{eq:ansatzacweaknoise} for $a_c$ and comparing coefficients at different powers of $s$ we find
\begin{align}
	a_{i}&=0 \text{ for } i=1, \dots, n/2-2,
	\notag
	\\
	a_{n/2-1}&= (n-1)!! D^{1-n/2}.
	\label{eq:leadingacweaknoise1}
\end{align}
Inserting these coefficients in the ansatz \eqref{eq:ansatzacweaknoise} we obtain the leading behavior of $a_c$ for weak noise
\begin{align}
	a_{c} \approx (n-1)!! \Big( \frac{\sigma^{2}}{2D}\Big)^{n/2-1},
	\label{eq:acweaknoise}
\end{align}
which is the upper bound for $a_c$ in \eqref{eq:boundsonac}.

For strong noise, due to the bounds \eqref{eq:boundsonac}, we use the ansatz
\begin{align}
	a_{c} = a_{1}(D) (\sigma^{2})^{n/2-1} + a_{2}(D) (\sigma^{2})^{n/2-2} +  \dots .
	\label{eq:ansatzacstrongnoise}
\end{align}
with the substitution
\begin{align}
	\lambda= \sigma^{n-2}
	\label{eq:substitutionstrongnoise}
\end{align}
the phase transition condition \eqref{eq:PTCmomentrelation} becomes
\begin{align}
	1=\frac{2D}{\lambda}\partial_{a_{1}} \ln I.
	\label{eq:ptcstrongnoise}
\end{align}
We use the substitution $x=\sigma y$ and the symmetry of the integrand to rewrite the integral \eqref{eq:laplace2} as
\begin{align}
	I&= 2\sigma \int_{0}^{\infty} \diff y \exp \big[ - \lambda \Phi(y)  \big],
	\notag
	\\
	\Phi(y)&= -\frac{a_{c}-D}{\lambda}y^2 + \frac{2}{n}y^{n}.
	\label{eq:subsintegral}
\end{align}
According to the bounds \eqref{eq:boundsonac}, for fixed $D$ and large enough noise we have $a_c-D>0$ and hence the integrand has its maximum at
\begin{align}
	y_{0}= \Big( \frac{a_{c}-D}{\lambda} \Big)^{1/(n-2)}.
	\label{eq:maximumintegrand}
\end{align}
Expanding $\Phi(y)$ around $y_0$ yields
\begin{align}
	I=& 2\sigma \exp\big[-\lambda \Phi(y_0)\big]
	\notag
	\\
	&\times \int_{0}^{\infty}\exp\big[-\lambda \Phi^{(2)}(y_0)/2 (y-y_0)^{2}\big] 
	\notag
	\\
	&\times\exp\Big[  -\lambda\big(\Phi^{(3)}(y_0)/6 (y-y_0)^{3} + \dots  \big) \Big] \diff y.
	\label{eq:integral3}
\end{align}
As $\lambda\rightarrow \infty$ the maximum becomes sharp and the main contribution to the integral comes only from the vicinity of the maximum. Hence we can change the lower bound to $-\infty$ without changing the behavior of the integral in the limit $\lambda \rightarrow \infty$. Expanding the last factor in the integrand around $y_0$ we obtain
\begin{align}
	I=& 2\sigma \exp\big[-\lambda \Phi(y_0)\big]
	\notag
	\\
	&\times \int_{0}^{\infty}\exp\big[-\lambda \Phi^{(2)}(y_0)/2 (y-y_0)^{2}\big] \big( 1 + \dots \big) \diff y.
	\label{eq:integral4}
\end{align}
Performing the Gaussian integrals we obtain
\begin{align}
	I= 2\sigma \exp\big[-\lambda \Phi(y_0)\big] \sqrt{\frac{2\pi}{\lambda \Phi^{(2)}(y_0)}} \Big[1+ \mathcal{O}(\lambda^{-1}) \Big].
	\label{eq:integral5}
\end{align}
Inserting this expression into Eq.~\eqref{eq:ptcstrongnoise} and using
\begin{align}
	\Phi(y_0)=& - \frac{n-2}{n} \Big(\frac{a_c-D}{\lambda} \Big)^{n/(n-2)},
	\notag
	\\
\Phi^{(2)}(y_0)=&2\bigg[-\frac{a_c -D}{\lambda}+ \Big( \frac{a_c -D}{\lambda}\Big)^{n/(n-2)}  \bigg],
\end{align}
and the ansatz \eqref{eq:ansatzacstrongnoise} we obtain
\begin{align}
	1 =& 2D \Big(\frac{a_c-D}{\lambda} \Big)^{2/(n-2)} 
	\notag
	\\
	&- \frac{2D}{\lambda} \frac{1- \frac{n(n-1)}{n-2} \Big(\frac{a_c-D}{\lambda} \Big)^{2/(n-2)}}{\frac{a_c-D}{\lambda} - (n-1) \Big(\frac{a_c-D}{\lambda} \Big)^{n/(n-2)}} + \mathcal{O}(\lambda^{-1}) 
	\notag
	\\
	&= 2D a_{1}^{2/(n-2)} + \mathcal{O}(\sigma^{-2}).
	\label{eq:ptcstrongnoise2}
\end{align}
Hence comparing coefficients of highest order in $\sigma^{2}$ yields
\begin{align}
	a_{1}= \Big(\frac{1}{2D} \Big)^{n/2-1}
	\label{eq:aonestrongnoise}
\end{align}
and thus the leading behavior of $a_c$ for strong noise is
\begin{align}
	a_{c}\approx \Big(\frac{\sigma^{2}}{2D} \Big)^{n/2-1},
	\label{eq:asympacstrongnoise}
\end{align}
which is the lower bound of $a_c$ in \eqref{eq:boundsonac}.

\section{Tricritical point inequalities\label{app:D}}

We prove that for $m=0$ the sixth cumulant $\kappa_{6}$ of the distribution $p_{\text{s}}(x, m=0)$, cf. Eq.~\eqref{eq:statsolutionnonlinearfp}, for the potential \eqref{eq:nonlinearsystem3} is negative when the fourth cumulant $\kappa_{4}$ is zero. Therefore we use the following Lemma. 
\begin{lemma}\label{lemma:sixthmoment}
	Consider a zero mean Gaussian probability distribution $g(x)$ and a smooth even probability distribution $p(x)$ with the same variance and fourth moment. Let $p(x)$ and $g(x)$ be not identical and not intersect in more than six points. Let further $g(x)>p(x)$ for large enough $x$.

	Then $g(x)$ and $p(x)$ intersect in exactly six points and 
	\begin{align}
		\langle x^{6}\rangle_{g} > \langle x^{6}\rangle_{p},
		\label{eq:sixthmomentlemma}
	\end{align}
	where $\langle x^{6}\rangle_{g}$ and $\langle x^{6}\rangle_{p}$ denote the sixth moment of $g(x)$ and $p(x)$, respectively.
\end{lemma}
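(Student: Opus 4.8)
The plan is to adapt the comparison--with--a--Gaussian argument of Lemma~\ref{lemma:3}, now exploiting that $g$ and $p$ agree in \emph{two} moments (variance and fourth moment) in addition to the normalization. Consequently, every polynomial term of degree $\le 4$ integrates to zero against $g-p$, which will let a single degree--six test polynomial do the work. First I would establish that the two densities intersect in exactly six points. Since $g$ and $p$ are even, continuous and normalized, the intersection set is symmetric; zero intersections are impossible because both integrate to one. If there were only two intersection points $\pm a_1$, the usual sign--alternation argument together with $g(x)>p(x)$ for large $|x|$ would give $(x^2-a_1^2)\bigl[g(x)-p(x)\bigr]\ge 0$ on $\mathbb{R}$, hence $\int_{-\infty}^{\infty}\diff x\,(x^2-a_1^2)[g-p]>0$ (strict, since $g\not\equiv p$ and both are continuous); but this integral equals $(\langle x^2\rangle_g-\langle x^2\rangle_p)-a_1^2\cdot 0=0$, a contradiction. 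Repeating the argument with the quartic $(x^2-a_1^2)(x^2-a_2^2)$ rules out four intersection points, this time using equality of the \emph{fourth} moments to annihilate the $x^4$ term. Since by hypothesis there are at most six intersections, there are exactly six, say at $\pm a_1,\pm a_2,\pm a_3$ with $0<a_1<a_2<a_3$.

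Next I would write down the polynomial
\begin{align}
	R(x):=(x^2-a_1^2)(x^2-a_2^2)(x^2-a_3^2)
	\notag
\end{align}
and check that $R(x)$ has the same sign as $g(x)-p(x)$ wherever the latter is nonzero. Starting from $|x|$ large, where $g>p$ and $R>0$, both $R$ and $g-p$ change sign exactly at $a_3$, then $a_2$, then $a_1$ (and symmetrically for $x<0$), so the sign patterns coincide on every interval. Hence $R(x)\bigl[g(x)-p(x)\bigr]\ge 0$ on all of $\mathbb{R}$, with strict inequality on a set of positive measure because $g\not\equiv p$ and both functions are continuous, so that $\int_{-\infty}^{\infty}\diff x\,R(x)[g(x)-p(x)]>0$. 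Expanding
\begin{align}
	R(x)=x^6-(a_1^2+a_2^2+a_3^2)x^4+(a_1^2a_2^2+a_1^2a_3^2+a_2^2a_3^2)x^2-a_1^2a_2^2a_3^2
	\notag
\end{align}
and integrating term by term against $g-p$, the $x^4$, $x^2$ and constant contributions all vanish, since $g$ and $p$ have equal fourth moment, equal variance and equal mass. What survives is $\langle x^6\rangle_g-\langle x^6\rangle_p>0$, which is the assertion.

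I expect the only delicate point to be Step~1, the combinatorial bookkeeping that forces \emph{exactly} six crossings. Ruling out zero, two and four crossings is the crux, and it works precisely because the degree of the test polynomial can be matched to the moments known to agree (degrees $0$ and $2$ in the "two points" case, degrees $\le 4$ in the "four points" case), so each test integral is forced to be zero; the hypotheses of the lemma have been tailored exactly so that this cancellation occurs. One should also note in passing that, in the intended application to $p_{\text{s}}(x,m=0)$ for the potential \eqref{eq:nonlinearsystem3}, equating with a Gaussian and taking logarithms reduces the intersection equation to the vanishing of a degree--six polynomial, which both supplies the "at most six points" hypothesis and guarantees that the six crossings are simple, so that the sign--alternation structure used above is legitimate. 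Once the six crossings are in hand, the construction of $R$ and the final integration are entirely routine, exactly parallel to Lemma~\ref{lemma:3}.
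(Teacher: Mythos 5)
Your proposal is correct and follows essentially the same route as the paper's proof: rule out two and four intersection points by integrating $g-p$ against the quadratic and quartic test polynomials (which vanish because the densities share all moments up to order four), conclude there are exactly six crossings, and then integrate against the degree-six polynomial with roots at those crossings so that only the sixth-moment difference survives. Your additional remarks on the sign-alternation bookkeeping and on the simplicity of the crossings in the intended application are consistent with, and slightly more careful than, the paper's presentation.
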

\begin{proof}
	The proof is based on an idea of Dyson \cite{Dyson43}.
Since both distributions are normalized and not identical they must intersect at least in one point.
The number of intersection points must be even because both distributions are even.
Hence there can be two, four or six intersection points.

Assume there are exactly four intersection points $c_1$, $-c_1$, $c_2$ and $-c_2$. Then the polynomial
\begin{align}
	q_4(x):= (x-c_1)(x+c_1)(x-c_2)(x+c_2)
	\label{eq:poly4app4}
\end{align}
has always the same sign as $g(x)-p(x)$ except when it is zero. Therefore
\begin{align}
	\int_{-\infty}^{\infty} \diff x \big[ g(x)-p(x) \big] q_{4}(x) >0.
	\label{eq:int1app4}
\end{align}
But the polynomial $q_4$ is only of fourth order and $g(x)$ and $p(x)$ agree in the first four moments. Therefore the integral \eqref{eq:int1app4} must be zero which is a contradiction.
For two intersection points an analog contradiction can be constructed.
Hence there must be exactly six intersection points. Let them be $\pm c_i$ for $i=1,2,3$ and consider the polynomial
\begin{align}
	q_6(x):= \prod_{i=1}^{3} (x-c_i)(x+c_i).
	\label{eq:poly6app4}
\end{align}
Again this polynomial has the same sign as $g(x)-p(x)$ except where it is zero and thus
\begin{align}
	\int_{-\infty}^{\infty} \diff x \big[ g(x)-p(x) \big] q_{6}(x) >0.
	\label{eq:int2app4}
\end{align}
On the other hand, $g(x)$ and $p(x)$ are both even and therefore all odd moments vanish, they also agree in the second and the fourth moment. Hence they coincide in the first five moments.
Thus the only remaining term in Eq.~\eqref{eq:int2app4} is
\begin{align}
	\int_{-\infty}^{\infty} \diff x \big[ g(x)-p(x) \big] q_{6}(x) = \langle x^{6}\rangle_{g} - \langle x^{6}\rangle_{p}>0
	\label{eq:int3app4}
\end{align}
which proofs the claim \eqref{eq:sixthmomentlemma}.
\end{proof}

We consider the stationary  distribution $p_{\text{s}}(x, m=0)$ as $p$ and a zero mean Gaussian with the same variance as $g$ of Lemma~\ref{lemma:sixthmoment}. 
As for a Gaussian all cumulants starting from the third are zero also the fourth cumulant is zero.
Since the fourth cumulant of $p$ is zero as well $p$ and $g$ must coincide in the fourth moment.
Hence Lemma~\ref{lemma:sixthmoment} can be applied and thus $\langle x^{6}\rangle_{g}>\langle x^{6}\rangle_{p_{s}}$. 
As $g$ and $p_{\text{s}}$ agree in the first five moments the sixth cumulant of $p_{\text{s}}$ must be smaller than the one of $g$.
But as $g$ is Gaussian its sixth cumulant is zero and hence the sixth cumulant of $p_{\text{s}}$ is negative.
As the cumulants of $p_{\text{s}}$ depend continuously on parameters $\kappa_6$ must be negative also in a neighborhood of the parameter set where $\kappa_{4}=0$.
Thus \eqref{eq:negativesixthcumulant} is proved.

Furthermore we have for the zero mean Gaussian $g$
\begin{align}
	\langle x^{6}\rangle_{g} - 15 \langle x^{2}\rangle^{3}_{g}=\kappa_{6} + 15 \langle x^{2}\rangle\kappa_{4}=0.
	\label{eq:cumulantconstr}
\end{align}
As $p_{\text{s}}$ and $g$ agree in the first five moments it follows by Lemma~\ref{lemma:sixthmoment} and the same argumentation as above that for $p_{\text{s}}$ at the tricritical point, where $\kappa_4=0$, 
\begin{align}
	\langle x^{6}\rangle_{p_{s}} - 15 \langle x^{2}\rangle^{3}_{p_{s}}<0.
	\label{eq:cumulantconstr2}
\end{align}
Thus \eqref{eq:inequalitysixthmoment2} is proved.

\bibliography{literatur.bib}

\end{document}